\documentclass[11pt]{article}

\usepackage{amsmath, amsfonts, amssymb, amsthm}

\usepackage{pgfplots}

\usepackage{subcaption} 

\usepackage{tikz}

\usepackage{wrapfig} 
\usepackage{algorithm}
\usepackage{algpseudocode}
\usepackage{listings}
\usepackage{xcolor}
\usepackage{tikz}

\usetikzlibrary{arrows.meta, positioning, shapes.geometric}
\usepackage{booktabs} 

\usepackage{amsmath}
\usepackage{amsfonts}
\usepackage{amssymb}
\usepackage{verbatim}
\usepackage[left=2.9cm,right=2.9cm,top=3cm,bottom=3cm]{geometry}

\usepackage[utf8]{inputenc}
\usepackage[T1]{fontenc}

\usepackage{enumerate}

\usepackage{graphicx}
\usepackage{hyperref}

\usepackage[utf8]{inputenc}

\usepackage{booktabs}

\usepackage{algpseudocode}
\usepackage{listings}
\usepackage{xcolor}

\usepackage{amsmath}
\usepackage{amsfonts}
\usepackage{amssymb}
\usepackage{verbatim}
\usepackage[left=2.9cm,right=2.9cm,top=3cm,bottom=3cm]{geometry}

\usepackage[utf8]{inputenc}
\usepackage[T1]{fontenc}

\usepackage{enumerate}

\usepackage{graphicx}

\usepackage{hyperref}

\newcommand{\mP}{{\mathbb P}}

  \definecolor{dgreen}{rgb}{.64,.0,.64}

\newtheorem{theorem}{Theorem}
\newtheorem{lemma}{Lemma}

\newtheorem{corollary}{Corollary}

\newtheorem{remark}{Remark}

\newtheorem{assumption}{Assumption}

\title{Synaptic Classification via Spike-Triggered Extrapolation\thanks{Dedicated to the memory of Antonio Galves (1943--2023), a mentor and collaborator whose brilliance continues to guide this field.}}

\author{
    Emilio De Santis \\
    \small Dipartimento di Matematica, Sapienza Università di Roma \\
    \small Piazzale Aldo Moro, 5, 00185, Rome, Italy \\
 \small \texttt{desantis@mat.uniroma1.it}, \texttt{emilio.desantis@uniroma1.it}
}

\begin{document}

\maketitle

\begin{abstract}

This work introduces a statistical procedure to infer the interaction graph of neuronal networks modeled by Galves-Löcherbach dynamics. The methodology performs bivariate inference, identifying synaptic links from the spike trains of pairs of neurons without observing the rest of the network.

We propose a Macro-Micro Extrapolation algorithm to address data sparsity by inferring interactions in the limit $\Delta \to 0^+$. The core component is a Spike-Triggered Estimator that leverages the local reset property to decouple synaptic jumps from background noise. By employing an adaptive logic that switches between sample averaging and Pyramid Extrapolation, the framework categorizes connections as excitatory, inhibitory, or null.

Numerical simulations demonstrate that the classifier identifies synapses without error across varying noise regimes and complex network topologies, even for observation windows broader than those predicted by the current theoretical bounds.

\medskip

\noindent
\emph{Keywords:} Neuronal networks, multivariate point processes, stochastic processes with memory of variable length, interaction graphs, statistical model selection.
\end{abstract}

\section{Introduction}

The identification of the underlying interaction graph between neurons, based on the recording of spike train time series, remains a fundamental statistical challenge \cite{DeSantis2022,  Duarte2019_Bernoulli, RB2013, Spaziani2025}.  Within the Galves-Löcherbach framework, this inference is technically demanding due to the rapid temporal dependencies of the model, which are often confounded by data sparsity and  background noise. In recent years, synaptic connectivity has been studied primarily through models that integrate diverse observed signals, such as brain rhythms and neuronal spikes \cite{kkk2019, Gerhard2011, Spaziani2025}. Our approach, however, focuses exclusively on spiking data—the fundamental discrete signal typically available in these contexts.

We focus our study on the continuous-time Galves-Löcherbach model \cite{Galves2013, Hodara2017} to present our techniques for identifying synaptic connections. Although we believe our methods are sufficiently general, we apply them here to a single model to avoid technical complexities that would obscure the core methodology. A key strength of our approach is that it identifies synaptic links using only the bivariate spiking history of the two targeted neurons—a minimal information set. Specifically, this work builds upon the research in \cite{DeSantis2022} by introducing a modified, more effective estimator and deriving sharper statistical error bounds for synaptic classification.

A primary difficulty inherent to the methodology in \cite{DeSantis2022} is that the statistical validity of the inference is tied to the choice of an observation window $\Delta$ that is sufficiently small to minimize the approximation bias. This requirement, in turn, leads to the extreme sparsity of the associated counting measure, as the statistics defined in \cite{DeSantis2022} are based on the ratio between sums of Bernoulli random variables whose success parameters are $\Theta(\Delta^2)$. Consequently, since asymptotic unbiasedness of the estimator constrains the analysis to such a micro-scale, an exceptionally long observation time would be required to obtain statistically significant results. 

To overcome the statistical challenges of micro-scale analysis, we introduce a refined estimation framework that improves upon \cite{DeSantis2022} through two distinct innovations.

First, we modify the original estimator by dynamically anchoring the observation windows to the target neuron's spikes. This anchoring represents an intrinsic statistical improvement: by exploiting the local reset property of the dynamics, the system is evaluated from a known state, which reduces the estimator's variance and empirically captures a higher number of informative events, yielding a more robust statistic regardless of any asymptotic procedure.

Second, to bypass the prohibitive observation times required by microscopic sampling, we introduce a Macro-Micro Extrapolation algorithm \cite{Richardson1911, TESTO}. To mathematically enable this approach, we apply a $\Delta^{-1}$ rescaling to our anchored statistic. While dividing by a constant is statistically irrelevant for inference at a single, fixed window size $\Delta$, it is crucial for taking the asymptotic limit. Without it, the statistic trivially vanishes as $\Delta \to 0^+$. 
The rescaling compensates for this degeneracy, ensuring convergence to a well-defined, non-zero value that isolates the synaptic interaction. By evaluating this properly scaled estimator over larger, computationally manageable observation windows, we robustly extrapolate its trend to the micro-scale limit using a novel \textbf{Pyramid Extrapolation} logic. This geometric approach acts as a discrete filter to neutralize network-induced non-linearities, allowing us to reliably classify connections as excitatory, inhibitory, or null, and achieving perfect accuracy even in dense topological structures and high-interference regimes (up to $N=60$).

It is also worth noting that recent extensions of the model integrate more complex neurobiological mechanisms, such as short-term synaptic facilitation \cite{Galves2020}, axonal conduction delays, and membrane potential leakage (see, e.g., \cite{D2026}). The methodologies introduced in this work are built to be effective beyond the specific cases presented here; preliminary tests on the model proposed in \cite{D2026} already suggest that the adaptive extrapolation logic maintains its robustness even in the presence of more complex dynamics. Finally, the mathematical relevance of the Galves-Löcherbach framework extends well beyond neurobiology: it has recently been applied to social networks (see \cite{GalvesLaxa2024, LL2024}) to model fast consensus and metastability in polarized communities, highlighting its broad multidisciplinary appeal.

The article is organized as follows. In Section \ref{Sec:2}, we define the stochastic neural model based on the Galves-Löcherbach framework. Section \ref{Sec:3} introduces the core of our methodological shift: the Spike-Triggered Estimator, highlighting its advantages over traditional fixed-grid sampling. In Section \ref{Sec:4}, we derive the statistical properties of the estimator. 
Section \ref{Sec:5} provides rigorous bounds on the conditional probabilities related to our estimators. We also analytically identify a critical window width ($\Delta_A$) required for single-scale identification, noting that such a resolution is often too fine for practical computational applications. 
Section \ref{Sec:6} is dedicated to sample complexity and the derivation of error bounds. Section \ref{Sec:7} provides numerical validation of the single-scale estimator, confirming the theoretical predictions. Section \ref{Sec:8} introduces the Macro-Micro Strategy and the Pyramid Extrapolation logic to overcome the limitations of the ultra-fine sampling window. Section \ref{Sec:9} evaluates the framework under extreme noise conditions using a minimal architecture. Section \ref{Sec:10} demonstrates its scalability and robustness in fully interactive and layered networks up to $N=60$. Section \ref{Sec:11} summarizes our findings and outlines future research directions.


\section{The Continuous-Time Galves-Löcherbach Model} \label{Sec:2}

We consider a neuronal network modeled as a family of interacting point processes $(N^i, i \in I)$ on a probability space $(\Omega, \mathcal{F}, \mathbb{P})$, where $I$ is a finite set of neurons. Following \cite{Hodara2017}, for each $i \in I$, $N^i$ is a random counting measure on $(\mathbb{R}_+, \mathcal{B}(\mathbb{R}_+))$.

For any interval $A = (s, t] \subset \mathbb{R}_+$, $N^i(A)$ denotes the number of spikes emitted by neuron $i$ in $A$. We assume the processes are simple, such that $\mathbb{P}(N^i(\{t\}) \leq 1 \text{ for all } t \geq 0) = 1$. The sequence of spike times $(T^i_k)_{k \geq 1}$ for neuron $i$ is:
\begin{equation}\label{Tk}
	T^i_k = \inf \{ t > T^i_{k-1} : N^i((T^i_{k-1}, t]) = 1 \}.
\end{equation}
The process $N^i$ is expressed as $N^i(A) = \sum_{k=1}^{\infty} \mathbf{1}_{A}(T^i_k)$. For each neuron $i \in I$, the \emph{firing intensity} $\lambda_i(t)$ is:
\begin{equation} \label{intensity}
	\lambda_i(t) = \phi_i(u_i(t)),
\end{equation}
where $u_i(t)$ is the \emph{membrane potential} and $\phi_i: \mathbb{R} \to \mathbb{R}_+$ is a non-negative, non-decreasing function.

The interaction between neurons is mediated by synaptic weights $w^{j \to i}$. For each $i \in I$, the \emph{pre-synaptic neighborhood} is $\mathcal{V}^i := \{j \in I : w^{j \to i} \neq 0\}$, partitioned into the excitatory subset $\mathcal{V}^i_+ = \{j \in I : w^{j \to i} > 0\}$ and the inhibitory subset $\mathcal{V}^i_- = \{j \in I : w^{j \to i} < 0\}$. We assume a maximum in-degree $d$, such that $|\mathcal{V}^i| \le d$.

The system of membrane potentials $\mathbf{u}(t) = (u_i(t))_{i \in I}$ defines the state of the network. For any $t \ge 0$, the potential of neuron $i$ is determined by the initial state $\mathbf{u}(0)$ and the subsequent spikes:
\begin{equation}\label{membrane_markov}
	u_i(t) = u_i(0)\mathbf{1}_{\{N^i([0, t)) = 0\}} + \sum_{j \in \mathcal{V}^i} w^{j \to i} N^j((L_i(t) \vee 0, t)) ,
\end{equation}
where $L_i(t) = \sup\{s < t : N^i(\{s\}) = 1\}$ is the last spike time of neuron $i$ before $t$. When neuron $i$ spikes, its potential $u_i(t)$ resets to zero, and the system satisfies the Markov property.

The Markovian formulation is advantageous for our \emph{spike-triggered} approach: by anchoring observation windows to the spikes of neuron $i$, we sample the process when the potential $u_i$ is zero.

To formalize the information flow, let $\mathcal{F}_t$ be the $\sigma$-algebra representing the full state of the system up to time $t$, and let $\mathcal{H}_t \subset \mathcal{F}_t$ be the $\sigma$-algebra generated by the observable spike trains.

\begin{remark}  \normalfont
	By the tower property of conditional expectation, any spiking event $E_t$ relates to the observable history as:
	$$ \mathbb{P}(E_t| \mathcal{H}_t) = \mathbb{E}[\mathbb{P}(E_t | \mathcal{F}_t) | \mathcal{H}_t] \,. $$
\end{remark}

Following \cite{DeSantis2022}, we assume the firing intensity functions $\phi_i : \mathbb{R} \to \mathbb{R}_+$ satisfy

\begin{assumption} \label{assu}
The intensity functions $(\phi_i)_{i \in I}$ and the network structure satisfy:
\begin{enumerate}
    \item \textbf{Monotonicity:} $\phi_i$ is non-decreasing for every $i \in I$;
    \item \textbf{Strictly positive lower bound:} $\min_{i \in I} \inf_{u \in \mathbb{R}} \phi_i(u) \geq \alpha > 0$;
    \item \textbf{Uniform upper bound:} $\max_{i \in I} \sup_{u \in \mathbb{R}} \phi_i(u) \leq \beta < +\infty$;
    \item \textbf{Minimum synaptic impact:} $\min_{j \in \mathcal{V}^i} |\phi_i(w^{j \to i}) - \phi_i(0)| \geq \delta > 0$;
    \item \textbf{Bounded in-degree:} $\max_{i \in I} |\mathcal{V}^i | \leq d$.
\end{enumerate}
\end{assumption}

The parameters $\alpha, \beta, \delta \in \mathbb{R}^+$ and $d \in \mathbb{N}$ are assumed to be fixed and known.



\section{Spike-Triggered Sampling Scheme} \label{Sec:3}

In this section, we introduce the \emph{Spike-Triggered Scheme} and define the associated sampling events using the notation $\mathcal{A}^i_k, \mathcal{B}^i_k, \mathcal{C}^{j \to i}_k$, and $\mathcal{D}^{j \to i}_k$. In this framework, the index $k$ refers to two sequences of stopping times, $(\tau^i_k)_{k \in \mathbb{N}}$ and $(\sigma^{j \to i}_k)_{k \in \mathbb{N}}$, which are derived from the firing history of the targeted neurons $i$ and $j$. These stopping times serve as \emph{dynamic triggers} for the observation windows. This approach allows for a recursive definition of the sampling events that explicitly exploits the local reset property of the membrane potential.

\subsection{Baseline Triggered Scheme for $\mathcal{A}^i$ and $\mathcal{B}^i$}

Let $(T^i_j)_{j \ge 1}$ be the ordered sequence of spike times of neuron $i$. We define a sequence of \emph{trigger times} $\{\tau^i_k\}_{k \ge 1}$ and the corresponding sampled events through the following recursive design:

\begin{enumerate}
    \item \textbf{Initialization}: 
    \begin{equation}
        \tau^i_1 := T^i_1.
    \end{equation}
    
    \item \textbf{Definition of Triggered Events}: For each trigger $\tau^i_k$, we define:
    \begin{itemize}
        \item The \textit{reference event} $\mathcal{A}^i_k$, marking the reset of the membrane potential:
        \begin{equation}
            \mathcal{A}^i_k := \{ N^i(\{\tau^i_k\}) = 1 \}.
        \end{equation}
        \item The \textit{bursting event} $\mathcal{B}^i_k(\Delta)$, occurring if neuron $i$ emits at least one subsequent spike within the window $(\tau^i_k, \tau^i_k + \Delta]$:
        \begin{equation}
            \mathcal{B}^i_k(\Delta) := \{ N^i(\tau^i_k, \tau^i_k + \Delta] > 0 \}.
        \end{equation}
    \end{itemize}
    
    \item \textbf{Conditional Recursive Update}: The next trigger time $\tau^i_{k+1}$ is determined by the outcome of $\mathcal{B}^i_k(\Delta)$:
    \begin{itemize}
        \item \textbf{If $\mathcal{B}^i_k(\Delta)$ occurs}: Let $T^i_{k,burst}$ be the first spike in $(\tau^i_k, \tau^i_k + \Delta]$. To preserve the reset condition for the next trial, we skip this spike and set:
        \begin{equation}
            \tau^i_{k+1} := \inf \{ T^i_{\ell} : T^i_{\ell} > T^i_{k,burst} \}.
        \end{equation}
        
        \item \textbf{If $\mathcal{B}^i_k(\Delta)$ does not occur}: The next trigger is the first spike occurring after the current horizon $\Delta$:
        \begin{equation}
            \tau^i_{k+1} := \inf \{ T^i_{\ell} : T^i_{\ell} > \tau^i_k + \Delta \}.
        \end{equation}
    \end{itemize}
\end{enumerate}

\begin{figure}[ht]
	\centering
	\begin{tikzpicture}[>=stealth, scale=1.1]
		\draw[->] (0,0) -- (10,0) node[right] {$t$};
		
		\node at (-1, 1.5) [left] {Neuron $i$};
		\draw[thick] (0,1.5) -- (9.5,1.5);
		
		\draw[blue, ultra thick, ->] (1,1.5) -- (1,2.2) node[above] {$\sigma^{j \to i}_k$};
		\draw[fill=white] (1,1.5) circle (2pt); 
		
		\node at (-1, 0.5) [left] {Neuron $j$};
		\draw[thick] (0,0.5) -- (9.5,0.5);
		\draw[orange, ultra thick, ->] (2.2,0.5) -- (2.2,1.1) node[above] {$T^j_{k,*}$};
		
		\draw[red, ultra thick, ->] (3.8,1.5) -- (3.8,2.2) node[above] {$T^i_{k, \text{react}}$};
		\draw[fill=white] (3.8,1.5) circle (2pt); 
		
		\draw[blue, ultra thick, ->] (7,1.5) -- (7,2.2) node[above] {$\sigma^{j \to i}_{k+1}$};
		\draw[fill=white] (7,1.5) circle (2pt);
		
		\draw[dashed, gray] (1,1.5) -- (1,-0.5);
		\draw[dashed, gray] (3,1.5) -- (3,-0.5);
		\draw[<->] (1,-0.3) -- node[below] {$\Delta$} (3,-0.3);
		
		\draw[dashed, gray] (2.2,1.5) -- (2.2,-0.9);
		\draw[dashed, gray] (4.2,1.5) -- (4.2,-0.9);
		\draw[<->] (2.2,-0.7) -- node[below] {$\Delta$} (4.2,-0.7);
		
		\node at (2, -1.3) {\small $\mathcal{C}^{j \to i}_k$ occurs};
		\node at (4.2, -1.3) {\small $\mathcal{D}^{j \to i}_k$ occurs};
		
	\end{tikzpicture}
	\caption{Spike-Triggered interaction scheme. In this trial, $T^j_{k,*}$ falls within the first window $\Delta$ (event $\mathcal{C}$ occurs). Consequently, a second window is opened at $T^j_{k,*}$, and the arrival of $T^i_{k, \text{react}}$ within this interval implies that event $\mathcal{D}$ occurs.}
	\label{fig:scheme_ST}
\end{figure}

\subsection{Interaction Triggered Scheme for $\mathcal{C}^{j \to i}$ and $\mathcal{D}^{j \to i}$}

We define a sequence of trigger times $(\sigma^{ j \to i}_k)_{k \ge 1}$ for the interaction analysis, where each trigger coincides with a spike of the target neuron $i$.

\begin{enumerate}
	\item \textbf{Initialization and Reset}: The first interaction trigger is $\sigma^{ j \to i}_1 := T^i_1$. At this instant, the membrane potential $u_i$ is reset to zero.
	
	\item \textbf{Causal Triggering}: For each $\sigma^{ j \to i}_k$, we identify the first spike of neuron $j$ after the trigger:
	\begin{equation} \label{Tj*}
		T^j_{k,*} := \inf \{ T^j_{h} : T^j_{h} > \sigma^{ j \to i}_k \}.
	\end{equation}
	The \emph{trigger event} $\mathcal{C}^{j \rightarrow i}_k(\Delta)$ occurs if this spike falls within the window:
	\begin{equation}
		\mathcal{C}^{j \rightarrow i}_k(\Delta) := \{ T^j_{k,*} \le \sigma^{ j \to i}_k + \Delta \}.
	\end{equation}
	
	\item \textbf{Conditional Observation}: If $\mathcal{C}^{j \rightarrow i}_k(\Delta)$ occurs, a secondary window of length $\Delta$ is opened at $T^j_{k,*}$. The \emph{interaction event} occurs if neuron $i$ fires:
	\begin{equation}
		\mathcal{D}^{j \rightarrow i}_k(\Delta) := \mathcal{C}^{j \rightarrow i}_k(\Delta) \cap \{ N^i(T^j_{k,*}, T^j_{k,*} + \Delta] > 0 \}.
	\end{equation}
	
	\item \textbf{Recursive Update}: The next trigger $\sigma^{j \to i}_{k+1}$ is determined as follows:
	\begin{itemize}
		\item \emph{If $\mathcal{C}^{j \rightarrow i}_k(\Delta)$ is not satisfied}: The next trigger is:
		\begin{equation}
			\sigma^{ j \to i}_{k+1} := \inf \{ T^i_{\ell} : T^i_{\ell} > \sigma^{ j \to i}_k + \Delta \}.
		\end{equation}
		
		\item \emph{If $\mathcal{C}^{j \rightarrow i}_k(\Delta)$ is satisfied}: 
		\begin{itemize}
			\item \emph{Case Success ($\mathcal{D}^{j \rightarrow i}_k(\Delta)$ occurs):} Let $T^i_{k, \text{react}} := \inf \{ T^i_{\ell} : T^i_{\ell} > T^j_{k,*} \}$. To maintain the reset condition, we set:
			\begin{equation}
				\sigma^{ j \to i}_{k+1} := \inf \{ T^i_{\ell} : T^i_{\ell} > T^i_{k, \text{react}} \}.
			\end{equation}
			\item \emph{Case Failure ($\mathcal{D}^{j \rightarrow i}_k(\Delta)$ does not occur):} The next trigger is:
			\begin{equation}
				\sigma^{ j \to i}_{k+1} := \inf \{ T^i_{\ell} : T^i_{\ell} > T^j_{k,*} + \Delta \}.
			\end{equation}
		\end{itemize}
	\end{itemize}
\end{enumerate}

The sequence $(\sigma^{j \to i}_k)_{k \ge 1}$ consists of stopping times with respect to the filtration $(\mathcal{F}_t)_{t \ge 0}$. By construction, each trigger coincides with a spike of neuron $i$, ensuring the potential $u_i(t)$ is reset at the start of each trial. This decoupling provides the basis for applying concentration inequalities to the Spike-Triggered Classifier.


\section{The Spike-Triggered Estimator $\hat{\mathcal{G}}^{j \to i}(\Delta)$} \label{Sec:4}

The sampling protocol introduced in Section \ref{Sec:3} allows for the construction of a statistic designed to isolate the functional influence between neurons. For a pair of neurons $(j, i)$, we define the \emph{spike-triggered estimator} as:
\begin{equation} \label{def_hat_G_cal}
	\hat{\mathcal{G}}^{j \to i}_{m_0, m_1}(\Delta) := \frac{1}{\Delta \delta} 
	\left( \frac{\sum_{k=1}^{m_1} \mathbf{1}_{\mathcal{D}^{j \to i}_k(\Delta)}}{\sum_{k=1}^{m_1} \mathbf{1}_{\mathcal{C}^{j \to i}_k(\Delta)}} - 
	\frac{\sum_{k=1}^{m_0} \mathbf{1}_{\mathcal{B}^i_k(\Delta)}}{m_0} \right).
\end{equation}

\begin{remark} \normalfont
	It is worth noting that the statistic originally introduced in \cite{DeSantis2022} was based on a fixed observation grid. The spike-triggered estimator $\hat{\mathcal{G}}^{j \to i}$ in \eqref{def_hat_G_cal} represents a novel construction that leverages dynamic anchors to evaluate the system from a known state (the local reset). Furthermore, the inclusion of the normalization factor $(\Delta \delta)^{-1}$ ensures that the statistic remains bounded away from zero as $\Delta \to 0$ in the presence of direct synaptic interaction. Specifically, since the baseline spiking probability in a window of length $\Delta$ is $O(\Delta)$, this scaling factor allows the estimator to converge to $(\phi_i(w^{j \to i}) - \phi_i(0))/\delta$, as will be proved in the following section.
\end{remark}

\subsection{Conditional Probabilities and Ergodic Means}

We define the theoretical quantity for a representative trial ($k=1$), which depends on the realization of the history up to the trigger times. For the spike-triggered approach, this is given by:
\begin{equation} \label{G_cal_theo_k1} 
	\mathcal{G}^{j \to i}(\Delta, \mathcal{F}_{\sigma^{j \to i}_1}, \mathcal{F}'_{\tau^{i}_1}) := \frac{1}{\Delta \delta} \left[ \frac{\mathbb{P}(\mathcal{D}^{j \to i}_1(\Delta) \mid \mathcal{F}_{\sigma^{j \to i}_1})}{\mathbb{P}(\mathcal{C}^{j \to i}_1(\Delta) \mid \mathcal{F}_{\sigma^{j \to i}_1})} - \mathbb{P}(\mathcal{B}^i_1(\Delta) \mid \mathcal{F}'_{\tau^{i}_1}) \right].
\end{equation}
In \eqref{G_cal_theo_k1}, the denominator of the baseline term is 1 since $\mathbb{P}(\mathcal{A}^i_1 \mid \mathcal{F}'_{\tau^{i}_1}) = 1$ by construction.  

The theoretical quantity defined in \eqref{G_cal_theo_k1} is a random variable, as it is defined through conditional probabilities with respect to the stopped $\sigma$-algebras $\mathcal{F}_{\sigma^{j \to i}_1}$ and $\mathcal{F}'_{\tau^{i}_1}$. These represent the information flow stopped exactly at the reset times of the target neuron $i$ for an interaction trial and a baseline trial, respectively. It is important to emphasize that $\mathcal{F}_{\sigma^{j \to i}_1}$ and $\mathcal{F}'_{\tau^{i}_1}$ represent distinct realizations of the past activity; they do not necessarily refer to the same point in absolute time nor to the same sequence of events. The random nature of this quantity arises directly from its dependence on the specific history of the system preceding each trial. 

Let $\nu$ be the equilibrium measure of the system under spike-triggered sampling. The expected value is:
\begin{equation} \label{E_G_adapt}
	\mathcal{G}^{j \to i}(\Delta) := \mathbb{E}_\nu [ \mathcal{G}^{j \to i}(\Delta, \mathcal{F}_{\sigma^{j \to i}_1}, \mathcal{F}'_{\tau^{i}_1}) ]. 
\end{equation}

Under Assumption \ref{assu}, the process is ergodic \cite{Hodara2017}. Consequently, the estimator converges almost surely to its ergodic mean:
\begin{equation}\label{ergodico}
	\hat{\mathcal{G}}_{m_0, m_1}^{j \to i}(\Delta) \xrightarrow[m_0, m_1 \to \infty]{a.s.} \mathcal{G}^{j \to i}(\Delta).
\end{equation}


\section{Probabilistic Bounds} \label{Sec:5}

We study the range of values that the conditional probability $\mathbb{P}(\mathcal{B}^i_k(\Delta) \mid \mathcal{F}_{\tau^i_k})$ can take, based on the history of the process up to the stopping time $\tau^i_k$. 

\begin{lemma}[Baseline Probability Estimates] \label{L1}
	Under Assumption 1, for any trigger time $\tau_{k}^{i}$, the random variable $\mathbb{P}(\mathcal{B}^i_k(\Delta) \mid \mathcal{F}_{\tau^i_k})$ satisfies the following inequalities almost surely:
	\begin{equation}\label{formula28}
		(1-e^{-\phi_i(0)\Delta})e^{-\beta |\mathcal{V}^i|\Delta} \le \mathbb{P}(\mathcal{B}^i_k(\Delta) \mid \mathcal{F}_{\tau^i_k}) \leq (1-e^{-\phi_i(0)\Delta}) e^{- \alpha |\mathcal{V}^i| \Delta} + (1-e^{-\beta \Delta})(1-e^{-\beta |\mathcal{V}^i| \Delta}) .
	\end{equation}
\end{lemma}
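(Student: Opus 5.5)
We prove both inequalities by conditioning on whether any presynaptic neuron of $i$ fires in the window $(\tau^i_k,\tau^i_k+\Delta]$. Let $\mathcal{E}_k := \{ N^j(\tau^i_k,\tau^i_k+\Delta] = 0 \ \text{for all } j \in \mathcal{V}^i\}$. At $\tau^i_k$ neuron $i$ has just spiked, so by \eqref{membrane_markov} $u_i(\tau^i_k)=0$. On $\mathcal{E}_k$, and before any further spike of $i$, the potential stays at $0$ throughout the window, so $i$ fires with constant intensity $\phi_i(0)$. Using the strong Markov property at the stopping time $\tau^i_k$, we work conditionally on $\mathcal{F}_{\tau^i_k}$. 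We then write
\[
\mathbb{P}(\mathcal{B}^i_k(\Delta)\mid\mathcal{F}_{\tau^i_k}) = \mathbb{P}(\mathcal{B}^i_k(\Delta)\cap\mathcal{E}_k\mid\mathcal{F}_{\tau^i_k}) + \mathbb{P}(\mathcal{B}^i_k(\Delta)\cap\mathcal{E}_k^c\mid\mathcal{F}_{\tau^i_k})
\]
and bound the two terms separately.

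\textbf{Coupling.} To make the bounds rigorous we use the standard thinning construction of the process from independent Poisson random measures $M^\ell$ on $\mathbb{R}_+\times\mathbb{R}_+$, with $N^\ell$ counting the atoms $(t,z)$ of $M^\ell$ such that $z\le\lambda_\ell(t^-)$. Since $\alpha\le\lambda_\ell\le\beta$ by Assumption \ref{assu}, each $N^\ell$ on the window is sandwiched between the points of $M^\ell$ lying below level $\alpha$ and those lying below level $\beta$. Moreover, as long as $\mathcal{E}_k$ has not been violated, $i$ fires exactly at the atoms of $M^i$ below level $\phi_i(0)$.

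\textbf{Lower bound.} The event $\mathcal{E}_k$ contains the event that no $M^j$, $j\in\mathcal{V}^i$, has an atom below level $\beta$ in the window. On that event the first spike of $i$ is governed by the atoms of $M^i$ below $\phi_i(0)$, which form a Poisson process independent of the $M^j$. Independence of the Poisson measures then gives
\[
\mathbb{P}(\mathcal{B}^i_k(\Delta)\mid\mathcal{F}_{\tau^i_k}) \ge (1-e^{-\phi_i(0)\Delta})\,e^{-\beta|\mathcal{V}^i|\Delta}.
\]

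\textbf{Upper bound.} On $\mathcal{E}_k$, the event $\mathcal{B}^i_k(\Delta)$ implies that $M^i$ has an atom below $\phi_i(0)$ in the window. Also, $\mathcal{E}_k$ is contained in the event that no $M^j$ has an atom below level $\alpha$. The two events involve disjoint Poisson measures, so the first term is at most $(1-e^{-\phi_i(0)\Delta})e^{-\alpha|\mathcal{V}^i|\Delta}$. For the second term, $\mathcal{B}^i_k(\Delta)$ implies that $M^i$ has an atom below level $\beta$, and $\mathcal{E}_k^c$ implies that some $M^j$, $j\in\mathcal{V}^i$, has an atom below level $\beta$. By independence this term is at most $(1-e^{-\beta\Delta})(1-e^{-\beta|\mathcal{V}^i|\Delta})$. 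Adding the two terms gives \eqref{formula28}.

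\textbf{Main obstacle.} The delicate step is the first term of the upper bound. Neuron $i$'s spikes do not affect the presynaptic neurons directly, but they can feed back through the network and influence the $N^j$. So $\mathcal{B}^i_k\cap\mathcal{E}_k$ is not literally a product event in the $N^\ell$. The Poisson dominating sets resolve this: we replace each event by a containing event defined on disjoint, independent measures. This requires the strong Markov property of the thinning construction at $\tau^i_k$, which we take from \cite{Hodara2017}, so that the post-$\tau^i_k$ Poisson measures are independent of $\mathcal{F}_{\tau^i_k}$.
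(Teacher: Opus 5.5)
Your proof is correct and follows the paper's argument: you use the same split of $\mathcal{B}^i_k(\Delta)$ according to whether the presynaptic neighborhood $\mathcal{V}^i$ fires in the window, and the same Poisson dominations at rates $\alpha|\mathcal{V}^i|$, $\beta|\mathcal{V}^i|$ and $\beta$ to produce each factor. The only difference is that you build these dominating processes through the thinning construction. That gives an explicit coupling, independent of $\mathcal{F}_{\tau^i_k}$ but tied to the post-$\tau^i_k$ dynamics, which the paper's ``auxiliary Poisson processes independent of the history'' implicitly require for its product factorizations to hold.
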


\begin{proof}
	The event $\mathcal{B}_{k}^{i}(\Delta)$ occurs if neuron $i$ emits at least one spike in $(\tau_k^i, \tau_k^i + \Delta]$. Since we condition on $\mathcal{F}_{\tau^i_k}$, all subsequent inequalities hold almost surely. By the reset property, $u_i(\tau_k^i) = 0$ a.s., implying the initial firing intensity is $\lambda_i(\tau_k^i) = \phi_i(0)$.
	
	To derive the bounds, we introduce two auxiliary Poisson processes, $P_*$ and $P^*$, with constant intensities $\alpha |\mathcal{V}^i|$ and $\beta |\mathcal{V}^i|$, respectively. These processes are constructed to be independent of the history $(\mathcal{F}_t)_{t \ge 0}$. Let $N^{\mathcal{V}^i}$ be the point process of the pre-synaptic neighborhood $\mathcal{V}^i$. By Assumption 1, $P_*$ and $P^*$ stochastically dominate $N^{\mathcal{V}^i}$ from below and above, respectively.
	
	For the \emph{lower bound}, we consider the case where no pre-synaptic spikes occur:
	\begin{align} \label{lower}
		\mathbb{P}(\mathcal{B}^i_k(\Delta) \mid \mathcal{F}_{\tau^i_k}) &\geq \mathbb{P}(\mathcal{B}^i_k(\Delta) \cap \{N^{\mathcal{V}^i}(\tau_k^i, \tau_k^i + \Delta] = 0\} \mid \mathcal{F}_{\tau^i_k}) \nonumber \\
		&\geq \mathbb{P}(\mathcal{B}^i_k(\Delta) \cap \{P^*(\tau_k^i, \tau_k^i + \Delta] = 0\} \mid \mathcal{F}_{\tau^i_k}) \nonumber \\
		&= (1-e^{-\phi_i(0)\Delta})e^{-\beta |\mathcal{V}^i|\Delta} ,
	\end{align}
	where the equality follows from the independence of $P^*$.
	
	For the \emph{upper bound}, we partition $\mathcal{B}^i_k(\Delta)$ based on the neighborhood activity. By replacing $N^{\mathcal{V}^i}$ with the independent processes $P_*$ and $P^*$, we obtain:
	\begin{align}
		\mathbb{P}(\mathcal{B}^i_k(\Delta) \mid \mathcal{F}_{\tau^i_k}) &\leq \mathbb{P}(\mathcal{B}^i_k(\Delta) \cap \{N^{\mathcal{V}^i} = 0\} \mid \mathcal{F}_{\tau^i_k}) + \mathbb{P}(\mathcal{B}^i_k(\Delta) \cap \{N^{\mathcal{V}^i} > 0\} \mid \mathcal{F}_{\tau^i_k}) \nonumber \\
		&\leq \mathbb{P}(\mathcal{B}^i_k(\Delta) \cap \{P_* = 0\} \mid \mathcal{F}_{\tau^i_k}) + \mathbb{P}(\mathcal{B}^i_k(\Delta) \cap \{P^* > 0\} \mid \mathcal{F}_{\tau^i_k}) \nonumber \\
		&\leq (1-e^{-\phi_i(0)\Delta}) e^{- \alpha |\mathcal{V}^i| \Delta} + (1-e^{-\beta \Delta})(1-e^{-\beta |\mathcal{V}^i| \Delta}). \label{finaleL1}
	\end{align}
	In \eqref{finaleL1}, the second term is bounded by assuming the maximum possible intensity $\beta$ for neuron $i$ throughout $\Delta$.
\end{proof}

Using the Taylor expansion with a Lagrange remainder one has 
\begin{equation}\label{uso_ripetuto}
1 - x \le e^{-x} \le 1 - x + \frac{x^2}{2},  \text{ and }
e^{-x} \ge 1-x
\end{equation} for $x \ge 0$, and assuming $|\mathcal{V}^i| \leq d$ and $\phi_i(0) \le \beta$, Lemma \ref{L1} yields the following uniform bounds.

\begin{corollary}\label{C1}
	Under the assumptions of Lemma \ref{L1}, for any $\Delta > 0$:
	\begin{equation} \label{sandwich_uniform}
		\phi_i(0)\Delta - \left(d + \frac{1}{2}\right) \beta^2 \Delta^2 \le \mathbb{P}(\mathcal{B}^i_k(\Delta) \mid \mathcal{F}_{\tau^i_k}) \le \phi_i(0)\Delta + d \beta^2 \Delta^2 .
	\end{equation}
\end{corollary}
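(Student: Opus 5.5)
The plan is to start from the two-sided estimate \eqref{formula28} of Lemma \ref{L1} and linearize each side in $\Delta$ using the elementary inequalities \eqref{uso_ripetuto}. Throughout I will write $n := |\mathcal{V}^i| \le d$ and $\phi_0 := \phi_i(0) \le \beta$. I will use $\alpha \le \beta$ only where a cross term must be absorbed. No case split is needed, although the bounds are only informative when $\beta\Delta$ is small.

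\textbf{Lower bound.} Applying $1-e^{-x} \ge x - x^2/2$ with $x=\phi_0\Delta$, and $e^{-x} \ge 1-x$ with $x=\beta n\Delta$, gives
\begin{equation*}
(1-e^{-\phi_0\Delta})e^{-\beta n\Delta} \ge \Big(\phi_0\Delta - \tfrac{1}{2}\phi_0^2\Delta^2\Big)(1-\beta n\Delta).
\end{equation*}
Here I need the first factor to be nonnegative before multiplying by the lower bound of the second. I will avoid that issue by using the cruder bound $1-e^{-x} \le x$ on the product. Concretely, write $(1-e^{-x})e^{-y} \ge (1-e^{-x})(1-y) = (1-e^{-x}) - (1-e^{-x})y$. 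Then bound $(1-e^{-x}) \ge x - x^2/2$ and $(1-e^{-x})y \le xy$. This yields
\begin{equation*}
\ge \phi_0\Delta - \tfrac{1}{2}\phi_0^2\Delta^2 - \phi_0\beta n\Delta^2 \ge \phi_0\Delta - \big(d+\tfrac12\big)\beta^2\Delta^2 .
\end{equation*}
Here $1-y$ may be negative, but $(1-e^{-x})e^{-y} \ge (1-e^{-x})(1-y)$ holds regardless, since $e^{-y} \ge 1-y$ and $1-e^{-x} \ge 0$.

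\textbf{Upper bound.} For the first term I use $1-e^{-x} \le x$ and $e^{-y} \le 1$, which gives $(1-e^{-\phi_0\Delta})e^{-\alpha n\Delta} \le \phi_0\Delta$. For the second term, $1-e^{-x} \le x$ twice gives $(1-e^{-\beta\Delta})(1-e^{-\beta n\Delta}) \le \beta\Delta\cdot\beta n\Delta \le d\beta^2\Delta^2$. Summing the two yields the stated upper bound $\phi_0\Delta + d\beta^2\Delta^2$. Since Lemma \ref{L1} holds almost surely for every trigger time, so do both bounds, uniformly in $k$ and in the history.

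The only delicate point is the lower bound: one must handle possibly negative factors like $1-\beta n\Delta$ for large $\Delta$. The decomposition above keeps every multiplication by a lower bound applied to a nonnegative quantity. Otherwise the argument is routine bookkeeping, with constants absorbed via $n\le d$ and $\phi_0\le\beta$.
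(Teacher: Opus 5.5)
Your proof is correct and follows the paper's route: you linearize both sides of \eqref{formula28} with the elementary inequalities \eqref{uso_ripetuto}, absorb constants via $|\mathcal{V}^i|\le d$ and $\phi_i(0)\le\beta$, and treat the upper bound exactly as the paper does. The only difference is in the lower bound, and there your version is the more careful one: the paper multiplies the two lower estimates into $(\phi_i(0)\Delta-\phi_i(0)^2\Delta^2/2)(1-\beta|\mathcal{V}^i|\Delta)$ before dropping the cubic term, which is only justified when one factor is nonnegative (e.g.\ $\beta|\mathcal{V}^i|\Delta\le 1$) and is false for large $\Delta$, whereas your chain $(1-e^{-x})e^{-y}\ge(1-e^{-x})(1-y)\ge x-x^2/2-xy$ only multiplies inequalities by nonnegative quantities and so actually proves the bound for every $\Delta>0$, as stated.
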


\begin{proof}
	For the lower bound, from \eqref{lower} we have:
	\begin{align*}
		\mathbb{P}(\mathcal{B}^i_k(\Delta) \mid \mathcal{F}_{\tau^i_k}) &\ge (1 - e^{-\phi_i(0)\Delta}) e^{-\beta |\mathcal{V}^i|\Delta} \\
		&\ge \left(\phi_i(0)\Delta - \frac{\phi_i(0)^2 \Delta^2}{2}\right) (1 - \beta |\mathcal{V}^i|\Delta) \\
		&= \phi_i(0)\Delta - \left( \beta |\mathcal{V}^i| \phi_i(0) + \frac{\phi_i(0)^2}{2} \right) \Delta^2 + \frac{\phi_i(0)^2 \beta |\mathcal{V}^i| \Delta^3}{2} \\
		&\ge \phi_i(0)\Delta - \left(d \beta^2 + \frac{\beta^2}{2}\right) \Delta^2,
	\end{align*}
	where we used $1-e^{-x} \ge x - x^2/2$ and $e^{-x} \ge 1-x$, and dropped the positive cubic term.
	
	For the upper bound, from \eqref{finaleL1} we observe that $e^{-\alpha |\mathcal{V}^i| \Delta} \le 1$. Using the inequality $1-e^{-x} \le x$ for $x \ge 0$, we obtain:
	\begin{align*}
		\mathbb{P}(\mathcal{B}^i_k(\Delta) \mid \mathcal{F}_{\tau^i_k}) &\le (1-e^{-\phi_i(0)\Delta}) \cdot 1 + (1-e^{-\beta \Delta})(1-e^{-\beta |\mathcal{V}^i| \Delta}) \\
		&\le \phi_i(0)\Delta + (\beta \Delta)(\beta |\mathcal{V}^i| \Delta) \\
		&\le \phi_i(0)\Delta + d \beta^2 \Delta^2,
	\end{align*}
	where we substituted $|\mathcal{V}^i| \le d$.
\end{proof}

The following Lemma \ref{L2} describes how the occurrence of a spike in neuron $j$ affects the spiking probability of neuron $i$ in the subsequent window $\Delta$.

\begin{lemma}[Probability Ratios for Connectivity] \label{L2} 
	Let $\Delta >0$. For any target neuron $i$ and candidate pre-synaptic neuron $j$, the conditional probability of an interaction success satisfies the following bounds $\mathbb{P}$-a.s.:
	\begin{enumerate}
		\item \textbf{No Connection ($j \notin \mathcal{V}^i$):}
		\begin{equation} \label{L2_no}
			(1 - e^{-\phi_i(0)\Delta}) e^{-2 \beta |\mathcal{V}^i| \Delta} \leq \frac{\mathbb{P}(\mathcal{D}^{j \to i}_k(\Delta) \mid \mathcal{F}_{\sigma^{j\to i}_k})}{\mathbb{P}(\mathcal{C}^{j \to i}_k(\Delta) \mid \mathcal{F}_{\sigma^{j\to i}_k})} \leq (1 - e^{-\phi_i(0)\Delta}) + (1-e^{-\beta  \Delta} )(1 - e^{-2\beta |\mathcal{V}^i| \Delta}) 
		\end{equation}
		\item \textbf{Excitatory Connection ($j \in \mathcal{V}^i_+$):}
		\begin{equation} \label{L2_exc}
			\frac{\mathbb{P}(\mathcal{D}^{j \to i}_k(\Delta) \mid \mathcal{F}_{\sigma^{j\to i}_k})}{\mathbb{P}(\mathcal{C}^{j \to i}_k(\Delta) \mid \mathcal{F}_{\sigma^{j\to i}_k})} \ge (1 - e^{-(\phi_i(0)+ \delta )\Delta}) e^{-2\beta (|\mathcal{V}^i| -1) \Delta}
		\end{equation}
		\item \textbf{Inhibitory Connection ($j \in \mathcal{V}^i_-$):}
		\begin{equation} \label{L2_inh}
			\frac{\mathbb{P}(\mathcal{D}^{j \to i}_k(\Delta) \mid \mathcal{F}_{\sigma^{j\to i}_k})}{\mathbb{P}(\mathcal{C}^{j \to i}_k(\Delta) \mid \mathcal{F}_{\sigma^{j\to i}_k})} \leq (1 - e^{-(\phi_i(0)- \delta)\Delta}) + (1-e^{-\beta  \Delta} )(1 - e^{-2\beta (|\mathcal{V}^i|-1) \Delta}) 
		\end{equation}
	\end{enumerate}
\end{lemma}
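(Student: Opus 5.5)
We follow the strategy of Lemma \ref{L1}, after reducing the ratio to a single conditional probability. Since $\mathcal{D}^{j\to i}_k(\Delta)\subset\mathcal{C}^{j\to i}_k(\Delta)$, the ratio in the statement equals $\mathbb{P}(\mathcal{D}^{j\to i}_k \mid \mathcal{C}^{j\to i}_k,\mathcal{F}_{\sigma^{j\to i}_k})$. We work on the event $\mathcal{C}^{j\to i}_k$, where $T^j_{k,*}$ is a stopping time and everything is conditioned on $\mathcal{F}_{T^j_{k,*}}$. The key observation concerns the state of neuron $i$ at $T^j_{k,*}$. Because $\sigma^{j\to i}_k$ is a reset time, $u_i(\sigma^{j\to i}_k)=0$. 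On the event $E_1$ that no spike of $\mathcal{V}^i\setminus\{j\}$ and no spike of $i$ occurs in $(\sigma^{j\to i}_k, T^j_{k,*})$, the representation \eqref{membrane_markov} gives $u_i(T^j_{k,*}+)=w^{j\to i}$ if $j\in\mathcal{V}^i$ and $u_i(T^j_{k,*}+)=0$ otherwise. Neuron $j$ itself does not fire in $(\sigma^{j\to i}_k,T^j_{k,*})$ by the definition \eqref{Tj*}. This is where the factor $2$ in the exponents enters: the pre-synaptic neighbours must be silent over two windows, the pre-trigger window $(\sigma^{j\to i}_k, T^j_{k,*})$ of length $\le\Delta$ and the observation window $(T^j_{k,*},T^j_{k,*}+\Delta]$. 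When $j\in\mathcal{V}^i$, only $|\mathcal{V}^i|-1$ other neighbours are relevant, because spikes of $j$ in the second window are either harmless for the bound or controlled by $\beta$. The silence of $i$ in the first window is automatic on the success branch up to the choice of $T^i_{k,\mathrm{react}}$; if needed we include it in $E_1$ and note that the stated bounds only need the neighbourhood silence, since a spike of $i$ before $T^j_{k,*}$ again resets $u_i$ to $0$ before $j$'s contribution.

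\textbf{Lower bounds (cases 1 and 2).} We restrict to the event that the relevant neighbours, $\mathcal{V}^i$ or $\mathcal{V}^i\setminus\{j\}$, are silent on $(\sigma^{j\to i}_k, T^j_{k,*}+\Delta]$, an interval of length at most $2\Delta$. As in Lemma \ref{L1}, we dominate the neighbourhood counting process from above by an independent Poisson process $P^*$ of rate $\beta|\mathcal{V}^i|$ (or $\beta(|\mathcal{V}^i|-1)$). This gives a silence probability of at least $e^{-2\beta|\mathcal{V}^i|\Delta}$, respectively $e^{-2\beta(|\mathcal{V}^i|-1)\Delta}$.

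On this event the intensity of $i$ on the second window is constant, equal to $\phi_i(0)$ in case 1 and $\phi_i(w^{j\to i})\ge\phi_i(0)+\delta$ in case 2. The latter follows from Assumption \ref{assu}(4) together with monotonicity, since $w^{j\to i}>0$. The probability of at least one spike is then at least $1-e^{-\phi_i(0)\Delta}$, respectively $1-e^{-(\phi_i(0)+\delta)\Delta}$.

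For the first window, the conditional law given $\mathcal{C}$ must be handled carefully, because conditioning on $j$ firing inside $\Delta$ biases the past. We therefore run the coupling directly under $\mathbb{P}(\cdot\mid\mathcal{F}_{\sigma})$, bound the joint probability of $\mathcal{D}$ together with the silence event, and then divide by $\mathbb{P}(\mathcal{C}\mid\mathcal{F}_\sigma)$. We use the fact that the thinning construction makes $j$'s first spike independent of the dominating processes of the other neighbours.

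\textbf{Upper bounds (cases 1 and 3).} We split $\mathcal{D}$ according to whether the relevant neighbours are silent on the two windows. On the silent part, the intensity of $i$ is $\phi_i(0)$ (case 1) or $\phi_i(w^{j\to i})\le\phi_i(0)-\delta$ (case 3), which gives the first term. Here we bound the silence probability crudely by $1$, matching the statement, which has no $e^{-\alpha\cdots}$ factor. On the non-silent part, we bound the probability that some relevant neighbour spikes by $1-e^{-2\beta|\mathcal{V}^i|\Delta}$ using $P^*$, and the probability that $i$ spikes by $1-e^{-\beta\Delta}$ using the maximal intensity $\beta$. In case 3, spikes of $j$ after $T^j_{k,*}$ only lower $u_i$ further, so by monotonicity they can be ignored, which is why the count is $|\mathcal{V}^i|-1$.

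\textbf{Main obstacle.} The hardest step is justifying the product and independence structure after conditioning on $\mathcal{C}$. We must show that the coupling with independent Poisson processes (thinning of a common Poisson random measure) survives division by $\mathbb{P}(\mathcal{C}\mid\mathcal{F}_\sigma)$. The first approach we would try is to construct all processes from independent Poisson random measures, one per neuron. In that construction, the event ``other neighbours silent'' is dominated by an event independent of $j$'s measure and of $i$'s measure on the second window. The factorization then cancels the $\mathcal{C}$-probability.
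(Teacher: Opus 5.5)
\textbf{There is a genuine gap, and it sits exactly at the step you flag as the main obstacle.} Your proposed resolution does not work. In the construction from independent Poisson random measures, the neighbours' measures are indeed independent of $j$'s measure $\Pi_j$. But $\mathcal{C}^{j\to i}_k(\Delta)$ is not a function of $\Pi_j$ alone: $T^j_{k,*}$ is obtained by thinning $\Pi_j$ at level $\phi_j(u_j(t))$, and $u_j$ is driven by $\mathcal{V}^j$, which may intersect $\mathcal{V}^i$ or be driven by it. So the claim that ``$j$'s first spike is independent of the dominating processes of the other neighbours'' is false in general.

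Write $\sigma=\sigma^{j\to i}_k$ and let $S$ be the event that the neighbours' measures have no atom in $(\sigma,\sigma+2\Delta]\times[0,\beta]$. Then the factorization $\mathbb{P}(S\cap\mathcal{C}\mid\mathcal{F}_\sigma)=\mathbb{P}(S)\,\mathbb{P}(\mathcal{C}\mid\mathcal{F}_\sigma)$ fails. Conditioning on $\mathcal{C}$ can inflate the probability of neighbour activity before $T^j_{k,*}$ by a factor of order $\beta/\alpha$. The paper's proof has the same hole: it starts from $\mathbb{P}(\mathcal{D}\mid T^j_{k,*}=t_*,\mathcal{F}_\sigma)$ and then bounds quantities conditioned on $\mathcal{F}_\sigma$ only. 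The hole cannot be patched, because the inequalities themselves are false, as the following example shows.

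Take $d=1$ and a neuron $\ell$ with no inputs and $\phi_\ell\equiv\beta$. Let $\mathcal{V}^j=\mathcal{V}^i=\{\ell\}$ with excitatory weights, $\phi_j(0)=\phi_i(0)=\alpha$ and $\phi_j(w^{\ell\to j})=\phi_i(w^{\ell\to i})=\beta$. Assumption~\ref{assu} then holds with $\delta=\beta-\alpha$; let $i$ project to nobody.

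Work on the $\mathcal{F}_\sigma$-measurable event $\{u_j(\sigma)=0\}$, which has positive probability. There $\mathbb{P}(\mathcal{C}\mid\mathcal{F}_\sigma)=\alpha\Delta+O(\Delta^2)$, while the event ``$\ell$ fires, then $j$ fires, both before $\sigma+\Delta$'' has probability $\beta^2\Delta^2/2+O(\Delta^3)$. So, given $\mathcal{C}$, $\ell$ has already fired with probability about $\beta^2\Delta/(2\alpha)$, after which $i$ fires at rate $\beta$. A direct computation gives
\[
\frac{\mathbb{P}(\mathcal{D}\mid\mathcal{F}_\sigma)}{\mathbb{P}(\mathcal{C}\mid\mathcal{F}_\sigma)}=\alpha\Delta+\frac{1}{2}\Big[\frac{(\beta^2-\alpha^2)\beta}{\alpha}-\alpha^2\Big]\Delta^2+O(\Delta^3),
\]
whereas the case-1 upper bound equals $\alpha\Delta+(2\beta^2-\alpha^2/2)\Delta^2+O(\Delta^3)$. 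The bound is therefore violated whenever $\beta^2-\alpha^2>4\alpha\beta$. Numerically, for $\alpha=1$, $\beta=10$, $\Delta=10^{-3}$, the exact ratio is at least $1.46\times10^{-3}$ against a bound of $1.20\times10^{-3}$.

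The same mechanism breaks the other inequalities:
\begin{itemize}
\item If $\ell$ is made inhibitory to $i$, with $\phi_i(0)=\beta$ and $\phi_i(w^{\ell\to i})=\alpha$, the case-1 lower bound fails: the ratio is at most $9.49\times10^{-3}$ against a bound of $9.75\times10^{-3}$.
\item Adding $j$ to $\mathcal{V}^i$ produces the analogous failures in cases 2 and 3.
\end{itemize}

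Your factorization is valid under an extra hypothesis that makes $\mathcal{C}$ independent of the neighbourhood's activity. An example is $j$ being an input-free Poisson driver, as in Sections~\ref{Sec:7} and~\ref{Sec:9}. Without such a hypothesis, the only honest decoupling combines two facts. First, $\mathcal{C}\subset\{\Pi_j((\sigma,\sigma+\Delta]\times[0,\beta])\ge1\}$, an event genuinely independent of the neighbours' measures. Second, $\mathbb{P}(\mathcal{C}\mid\mathcal{F}_\sigma)\ge1-e^{-\alpha\Delta}$. Together these give $\mathbb{P}(S^c\mid\mathcal{C},\mathcal{F}_\sigma)\le(1-e^{-\beta\Delta})(1-e^{-2\beta|\mathcal{V}^i|\Delta})/(1-e^{-\alpha\Delta})$. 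The correct second-order error terms therefore carry an extra factor $\beta/\alpha$.
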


\begin{proof}
	Since $\mathcal{D}^{j \to i}_k(\Delta) \subseteq \mathcal{C}^{j \to i}_k(\Delta)$, the ratio reduces to the conditional probability:
	\begin{equation} 
		\frac{\mathbb{P}(\mathcal{D}^{j \to i}_k(\Delta) \mid \mathcal{F}_{\sigma^{j\to i}_k})}{\mathbb{P}(\mathcal{C}^{j \to i}_k(\Delta) \mid \mathcal{F}_{\sigma^{j\to i}_k})} = \mathbb{P}(\mathcal{D}^{j \to i}_k(\Delta) \mid \mathcal{C}^{j \to i}_k(\Delta), \mathcal{F}_{\sigma^{j\to i}_k}) . 
	\end{equation}
	Recalling the definition in \eqref{Tj*}, let $T^j_{k,*} = t_* \in (\sigma^{j\to i}_k, \sigma^{j\to i}_k + \Delta)$ be the time of the trigger spike. To establish uniform bounds, we consider the dominating Poisson processes $P^*$ (with intensity $|\mathcal{V}^i | \beta$) and $P^{**}$ (with intensity $\beta$), constructed on an enlarged probability space to be independent of each other and of the filtration $(\mathcal{F}_t)_{t \ge 0}$. These processes provide uniform stochastic dominance for the neighborhood activity $N^{\mathcal{V}^i}$ and the neuron $i$, respectively.
	
	\medskip
	\noindent \textbf{Upper bound ($j \notin \mathcal{V}^i$).} 
	We partition the firing event of neuron $i$ based on the activity of its pre-synaptic neighborhood $\mathcal{V}^i$ in the time interval $\mathcal{T} = (\sigma^{j\to i}_k, t_* + \Delta]$. Let $\mathcal{E}$ be the event $\{N^{\mathcal{V}^i}(\mathcal{T}) = 0\}$. We decompose the probability as follows:
	\begin{align}
		\mathbb{P}(\mathcal{D}^{j \to i}_k \mid T^j_{k,*} = t_*, \mathcal{F}_{\sigma^{j\to i}_k}) &= \mathbb{P}(\mathcal{D}^{j \to i}_k \cap \mathcal{E} \mid \mathcal{F}_{\sigma^{j\to i}_k}) + \mathbb{P}(\mathcal{D}^{j \to i}_k \cap \mathcal{E}^c \mid \mathcal{F}_{\sigma^{j\to i}_k}) \nonumber \\
		&\le \mathbb{P}(\mathcal{D}^{j \to i}_k \mid \mathcal{E}, \mathcal{F}_{\sigma^{j\to i}_k}) + \mathbb{P}(\{P^{**}(0,  \Delta] \ge 1\} \cap \{P^*(0,2 \Delta) \ge 1\} ) \nonumber \\
		&\le (1 - e^{-\phi_i(0)\Delta}) + (1 - e^{-\beta \Delta})(1 - e^{-2|\mathcal{V}^i|\beta \Delta}). \label{upper_rigorosa}
	\end{align}
	In the first term of \eqref{upper_rigorosa}, the intensity of neuron $i$ on the event $\mathcal{E}$ is exactly $\phi_i(0)$ because $u_i(\sigma^{j\to i}_k)=0$ and no pre-synaptic spikes from $\mathcal{V}^i$ occur in $\mathcal{T}$. In the second term, we utilize the independence of the auxiliary processes $P^*$ and $P^{**}$ to obtain a product of probabilities, ensuring an error term of order $O(\Delta^2)$.
	
	\medskip
	\noindent \textbf{Lower bound ($j \notin \mathcal{V}^i$).} 
	Focusing on the scenario where the neighborhood remains silent, and noting that the interval $\mathcal{T}$ has a maximum length of $2\Delta$, we have:
	\begin{align} 
		\mathbb{P}(\mathcal{D}^{j \to i}_k \mid T^j_{k,*} = t_*, \mathcal{F}_{\sigma^{j\to i}_k}) &\geq \mathbb{P}(\mathcal{D}^{j \to i}_k \cap \mathcal{E} \mid \mathcal{F}_{\sigma^{j\to i}_k}) \nonumber \\
		&= \mathbb{P}(\mathcal{D}^{j \to i}_k \mid \mathcal{E}, \mathcal{F}_{\sigma^{j\to i}_k}) \mathbb{P}(\mathcal{E} \mid \mathcal{F}_{\sigma^{j\to i}_k}) \nonumber \\ 
	&\geq  \mathbb{P}(\mathcal{D}^{j \to i}_k \mid \mathcal{E}, \mathcal{F}_{\sigma^{j\to i}_k}) \mathbb{P}(P^* (0, 2\Delta) =0) 
                                \nonumber \\ 
		&\geq (1 - e^{-\phi_i(0)\Delta}) e^{-2   |\mathcal{V}^i|  \beta \Delta}, \label{lower_rigorosa}
	\end{align}
	where we used the stochastic dominance of $P^*$ over $N^{\mathcal{V}^i}$.

\medskip
\noindent \textbf{Excitatory Connection ($j \in \mathcal{V}^i_+$).} 
The trigger spike at $t_*$ induces an increment in the membrane potential of neuron $i$. Let us consider the neighborhood of $i$ excluding the trigger neuron $j$, i.e., $\mathcal{V}^i \setminus \{j\}$, and let $\mathcal{E}^{\setminus j} = \{N^{\mathcal{V}^i \setminus \{j\}}(\mathcal{T}) = 0\}$ be the event that this reduced neighborhood remains silent during $\mathcal{T} = (\sigma^{j\to i}_k, t_* + \Delta]$. 
On the event $\mathcal{E}^{\setminus j}$, the potential $u_i$ is $0$ for $t \in (\sigma^{j\to i}_k, t_*]$ and, at the time of the trigger spike, it jumps to $u_i(t_*^+) = w^{j \to i}$. Since $w^{j \to i} > 0$, any subsequent spikes from $j$ within $(t_*, t_* + \Delta]$ would further increment the potential by $w^{j \to i}$ each time. By the non-decreasing property of $\phi_i$ (Assumption \ref{assu}), the intensity satisfies $\lambda_i(t) = \phi_i(u_i(t)) \ge \phi_i(w^{j \to i}) \ge \phi_i(0) + \delta$ for all $t \in (t_*, t_* + \Delta]$ $\mathbb{P}$-a.s. on $\mathcal{E}^{\setminus j}$. Following the logic in \eqref{lower_rigorosa}, we have
\begin{align} 
    \mathbb{P}(\mathcal{D}^{j \to i}_k \mid T^j_{k,*} = t_*, \mathcal{F}_{\sigma^{j\to i}_k}) &\ge \mathbb{P}(\mathcal{D}^{j \to i}_k \cap \mathcal{E}^{\setminus j} \mid \mathcal{F}_{\sigma^{j\to i}_k}) \nonumber \\
    &\ge (1 - e^{-(\phi_i(0) + \delta)\Delta}) e^{-2(|\mathcal{V}^i|-1)\beta \Delta}, \label{lower_exc_rigorosa}
\end{align}
where the exponential term accounts for the stochastic dominance of $P^*$ over the $|\mathcal{V}^i|-1$ remaining pre-synaptic neurons.

\medskip
\noindent \textbf{Inhibitory Connection ($j \in \mathcal{V}^i_-$).} 
Similarly, for an inhibitory connection, the trigger spike at $t_*$ induces an immediate decrement in the membrane potential, such that $u_i(t_*^+) = w^{j \to i} < 0$. On the event $\mathcal{E}^{\setminus j}$, any subsequent spikes from $j$ within $(t_*, t_* + \Delta]$ would further decrease the potential (making it more negative). By the non-decreasing property of $\phi_i$ (Assumption \ref{assu}), the intensity for $t \in (t_*, t_* + \Delta]$ is bounded from above by $\phi_i(w^{j \to i}) \le \phi_i(0) - \delta$. Thus, the upper bound \eqref{upper_rigorosa} is modified as:
\begin{align}
    \mathbb{P}(\mathcal{D}^{j \to i}_k \mid T^j_{k,*} = t_*, \mathcal{F}_{\sigma^{j\to i}_k}) &\le (1 - e^{-(\phi_i(0) - \delta)\Delta}) + (1 - e^{-\beta \Delta})(1 - e^{-2(|\mathcal{V}^i|-1)\beta \Delta}). \label{upper_inh_rigorosa}
\end{align}
In both cases, the terms involving $(|\mathcal{V}^i|-1)$ correctly reflect that the trigger neuron $j$ is excluded from the silent neighborhood event $\mathcal{E}^{\setminus j}$, as its activity is already accounted for by the conditioning on $T^j_{k,*}$ and the subsequent monotonicity arguments.
\end{proof}

\subsection{Explicit Linear-Quadratic Bounds}

Following the results established in Lemma~\ref{L2}, we simplify the exponential bounds into a more manageable linear-quadratic form. By utilizing \eqref{uso_ripetuto} and $|\mathcal{V}^i| \leq d$, we obtain 

\begin{corollary} \label{cor:explicit_bounds}
Under the assumptions of Lemma \ref{L2}, the ratio of conditional probabilities for any pair of neurons $(j, i)$ satisfies the following explicit bounds almost surely ($P$-a.s.):

\begin{enumerate}
    \item \textbf{No Connection ($j \notin \mathcal{V}^i$):} 
    \begin{equation}
        \phi_i(0)\Delta -  \left( 2d + \frac{1}{2} \right) \beta^2 \Delta^2 \le \frac{\mP(\mathcal{D}_{k}^{j\to i}(\Delta)|\mathcal{F}_{\sigma^{j \to i}_k})}{\mP(\mathcal{C}_{k}^{j\rightarrow i}(\Delta)|\mathcal{F}_{\sigma^{j \to i}_k})} \le \phi_i(0)\Delta + 2d\beta^2 \Delta^2
    \end{equation}

    \item \textbf{Excitatory Connection ($j \in \mathcal{V}_+^i$):}
\begin{equation} \label{eq:44}
\frac{\mP(\mathcal{D}{k}^{j\rightarrow i}(\Delta)|\mathcal{F}{\sigma^{j \to i}_k})}{\mP(\mathcal{C}{k}^{j\rightarrow i}(\Delta)|\mathcal{F}_{\sigma^{j \to i}_k})} \ge (\phi_i(0) + \delta)\Delta - \left( 2d - 1.5 \right) \beta^2 \Delta^2
\end{equation}

    \item \textbf{Inhibitory Connection ($j \in \mathcal{V}_-^i$):}
    \begin{equation}
        \frac{\mP(\mathcal{D}_{k}^{j\rightarrow i}(\Delta)|\mathcal{F}_{\sigma^{j \to i}_k})}{\mP(\mathcal{C}_{k}^{j\rightarrow i}(\Delta)|\mathcal{F}_{\sigma^{j \to i}_k})} \le (\phi_i(0) - \delta)\Delta + 2\left  (d-1 \right )\beta^2 \Delta^2
    \end{equation}
\end{enumerate}
\end{corollary}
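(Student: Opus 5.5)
The plan is to repeat, case by case, the argument of Corollary~\ref{C1}, this time applied to the three exponential bounds of Lemma~\ref{L2}. Only the elementary inequalities \eqref{uso_ripetuto} are needed, in the forms
\[
x - \tfrac{x^2}{2} \le 1-e^{-x} \le x, \qquad e^{-x} \ge 1-x \qquad (x \ge 0),
\]
together with the bounds $|\mathcal{V}^i| \le d$ and $\phi_i(0) \le \beta$. In the excitatory case I also need $\phi_i(0)+\delta \le \phi_i(w^{j\to i}) \le \beta$, which follows from Assumption~\ref{assu}: for $w^{j\to i}>0$, monotonicity gives $\phi_i(w^{j\to i}) - \phi_i(0) = |\phi_i(w^{j\to i}) - \phi_i(0)| \ge \delta$.

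\textbf{Upper bounds (no connection and inhibitory).} I will drop nothing and simply use $1-e^{-x}\le x$ on every factor.
\begin{itemize}
\item From \eqref{L2_no}, the no-connection upper bound is at most $\phi_i(0)\Delta + \beta\Delta\cdot 2\beta|\mathcal{V}^i|\Delta \le \phi_i(0)\Delta + 2d\beta^2\Delta^2$.
\item From \eqref{L2_inh}, the inhibitory bound is at most $(\phi_i(0)-\delta)\Delta + \beta\Delta\cdot 2\beta(|\mathcal{V}^i|-1)\Delta \le (\phi_i(0)-\delta)\Delta + 2(d-1)\beta^2\Delta^2$.
\end{itemize}
In the inhibitory case $\phi_i(0)-\delta \ge \phi_i(w^{j\to i}) \ge \alpha > 0$, so the inequality $1-e^{-x}\le x$ is applied to a nonnegative argument.

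\textbf{Lower bounds (no connection and excitatory).} Write $a$ for the rate and $c$ for the exponent coefficient, so that each lower bound in Lemma~\ref{L2} has the form $(1-e^{-a\Delta})e^{-c\Delta}$.
\begin{itemize}
\item No connection: $a=\phi_i(0)$ and $c=2\beta|\mathcal{V}^i|$.
\item Excitatory: $a=\phi_i(0)+\delta$ and $c=2\beta(|\mathcal{V}^i|-1)$.
\end{itemize}
Multiplying the two elementary lower bounds gives
\[
\left(a\Delta - \tfrac{a^2\Delta^2}{2}\right)(1-c\Delta) = a\Delta - \left(ac + \tfrac{a^2}{2}\right)\Delta^2 + \tfrac{a^2 c}{2}\Delta^3 .
\]
I then drop the nonnegative cubic term and use $a\le\beta$. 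In the no-connection case this gives the coefficient $2d\beta^2+\tfrac12\beta^2=(2d+\tfrac12)\beta^2$. In the excitatory case it gives $2(d-1)\beta^2+\tfrac12\beta^2=(2d-1.5)\beta^2$, which is exactly the constant in \eqref{eq:44}.

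\textbf{Main obstacle.} The delicate step is multiplying the two lower bounds, since $A\ge a'$ and $B\ge b'$ imply $AB\ge a'b'$ only when the signs cooperate. The plan is to split into cases.
\begin{itemize}
\item If $a\Delta\le 2$, then $a'=a\Delta-a^2\Delta^2/2\ge0$ and $B=e^{-c\Delta}\ge0$. Hence $AB\ge a'B\ge a'b'$: if $b'\ge0$ this uses $B\ge b'$, and if $b'<0$ then $a'b'\le 0\le AB$.
\item If $a\Delta>2$, then $a\le\beta$ gives $a\Delta-\tfrac12\beta^2\Delta^2 \le a\Delta - \tfrac12 a^2\Delta^2 < 0$. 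The claimed right-hand side, whose subtracted coefficient is at least $\tfrac12\beta^2$, is therefore negative, and the bound holds trivially because the ratio is a probability.
\end{itemize}
Finally, I note that $|\mathcal{V}^i|\ge1$ in the excitatory case, so $c\ge0$, and that all statements hold $\mathbb{P}$-a.s. because they are inherited pointwise from Lemma~\ref{L2}.
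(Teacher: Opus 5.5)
Your proposal is correct and follows the same route as the paper: you apply $x-\tfrac{x^2}{2}\le 1-e^{-x}\le x$ and $e^{-x}\ge 1-x$ to each bound of Lemma~\ref{L2}, drop the nonnegative cubic term, and use $\phi_i(0)+\delta\le\beta$ and $|\mathcal{V}^i|\le d$ to obtain the constants $2d$, $2d+\tfrac12$, $2d-1.5$ and $2(d-1)$. Your sign case-split for multiplying the two lower bounds, and your explicit justification that $\phi_i(0)+\delta\le\phi_i(w^{j\to i})\le\beta$, supply details that the paper's one-paragraph proof (and the proof of Corollary~\ref{C1} it refers to) leaves implicit.
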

\begin{proof}
The result follows by applying the standard exponential inequalities \eqref{uso_ripetuto} to the bounds established in Lemma \ref{L2}, following the same algebraic logic detailed in the proof of Corollary \ref{C1}. Specifically, the lower bounds are obtained by a second-order Taylor expansion with Lagrange remainder, substituting the uniform bounds $\lambda \le \beta$ and $\gamma \le 2d\beta$ (or $\gamma \le 2(d-1)\beta$ where appropriate) and dropping the positive cubic terms. 
\end{proof}

%
%
%
%

\subsection{Rigorous Bounds for $\mathcal{G}^{j \to i}(\Delta)$}

By combining the results of Corollary~\ref{C1} and Corollary~\ref{cor:explicit_bounds}, we derive the formal bounds for the expected quantity $\mathcal{G}^{j \to i}(\Delta)$ defined in \eqref{E_G_adapt}. This quantity isolates the synaptic influence by centering the interaction term around the baseline firing rate and normalizing it by the parameters $\delta$ and $\Delta$.

To ensure that the identification of the connectivity is not affected by finite-window effects, we introduce the operational condition:
\begin{equation}\label{regime}
\Delta < \Delta_{max} := \frac{\delta}{(6d + 1)\beta^2}.
\end{equation}
Under this constraint, the gap between the probability ratios for different connection types remains strictly positive and proportional to $\delta\Delta$. This ensures that the quadratic error does not obscure the excitatory or inhibitory signal.

\begin{theorem}[Stability of Interaction Estimators] \label{TEO-BOUND}
Under  \eqref{regime}, the quantity $\mathcal{G}^{j \to i}(\Delta)$ satisfies the following inequalities:

\begin{enumerate}
    \item \textbf{No Connection ($j \notin \mathcal{V}^i$):}
    \begin{equation}
        - \frac{ (3d + 0.5) \beta^2 }{\delta} \Delta \le \mathcal{G}^{j \to i}(\Delta) \le \frac{ (3d + 0.5) \beta^2 }{\delta} \Delta.
    \end{equation}

    \item \textbf{Excitatory Connection ($j \in \mathcal{V}_{+}^i$):}
    \begin{equation}
        \mathcal{G}^{j \to i}(\Delta) \ge 1 - \frac{ (3d - 1.5) \beta^2}{\delta} \Delta.
    \end{equation}

    \item \textbf{Inhibitory Connection ($j \in \mathcal{V}_{-}^i$):}
    \begin{equation}
        \mathcal{G}^{j \to i}(\Delta) \le -1 + \frac{(3d - 1.5) \beta^2}{\delta} \Delta.
    \end{equation}
\end{enumerate}
\end{theorem}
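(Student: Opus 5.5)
The plan is to bound the random quantity $\mathcal{G}^{j \to i}(\Delta, \mathcal{F}_{\sigma^{j \to i}_1}, \mathcal{F}'_{\tau^{i}_1})$ defined in \eqref{G_cal_theo_k1} almost surely, separately for each connection type, and then pass to $\mathcal{G}^{j \to i}(\Delta)$ through \eqref{E_G_adapt}. Taking $\mathbb{E}_\nu$ preserves deterministic inequalities, so almost sure bounds transfer directly. The random quantity is $(\Delta\delta)^{-1}(R - p)$, where $R$ is the ratio $\mathbb{P}(\mathcal{D}_1\mid\cdot)/\mathbb{P}(\mathcal{C}_1\mid\cdot)$ and $p = \mathbb{P}(\mathcal{B}^i_1(\Delta)\mid \mathcal{F}'_{\tau^i_1})$. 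Corollary~\ref{C1} gives the bracket
\[
\phi_i(0)\Delta - (d+\tfrac12)\beta^2\Delta^2 \le p \le \phi_i(0)\Delta + d\beta^2\Delta^2,
\]
and Corollary~\ref{cor:explicit_bounds} gives the corresponding bounds on $R$. The two trials condition on different histories, but this does not matter: both sets of bounds hold uniformly over every history, so they can be combined pointwise.

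\textbf{No connection.} For the upper bound, combine the upper bound on $R$ with the lower bound on $p$:
\[
R - p \le 2d\beta^2\Delta^2 + (d+\tfrac12)\beta^2\Delta^2 = (3d+\tfrac12)\beta^2\Delta^2 .
\]
For the lower bound, combine the lower bound on $R$ with the upper bound on $p$:
\[
R - p \ge -(2d+\tfrac12)\beta^2\Delta^2 - d\beta^2\Delta^2 = -(3d+\tfrac12)\beta^2\Delta^2 .
\]
The $\phi_i(0)\Delta$ terms cancel in both cases. Dividing by $\Delta\delta$ yields item 1.

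\textbf{Excitatory connection.} Use \eqref{eq:44} for $R$ and the upper bound on $p$:
\[
R - p \ge \delta\Delta - (2d - 1.5 + d)\beta^2\Delta^2 ,
\]
which after dividing by $\Delta\delta$ gives $1 - (3d-1.5)\beta^2\Delta/\delta$.

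\textbf{Inhibitory connection.} Symmetrically, combine the upper bound on $R$ with the lower bound on $p$:
\[
R - p \le -\delta\Delta + \bigl(2(d-1) + d + \tfrac12\bigr)\beta^2\Delta^2 = -\delta\Delta + (3d - 1.5)\beta^2\Delta^2 ,
\]
which after dividing by $\Delta\delta$ gives item 3.

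The role of \eqref{regime} is interpretive rather than logical. Under $\Delta<\Delta_{max}$, the excitatory lower bound exceeds the null upper bound, and the inhibitory upper bound lies below the null lower bound: the gap between excitatory and null is $\delta\Delta - (6d-1)\beta^2\Delta^2 > 0$, since $(6d+1)\beta^2\Delta < \delta$. So I would note this separation after the computation, but it is not needed to derive the inequalities themselves.

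\textbf{Main obstacle.} The delicate step is the constants in the excitatory and inhibitory corollary bounds. For instance, $2(d-1)+d+\tfrac12 = 3d-1.5$ matches, but the excitatory constant $(2d-1.5)$ in \eqref{eq:44} must be taken as given from Corollary~\ref{cor:explicit_bounds}. A direct expansion of \eqref{L2_exc} gives a second-order coefficient
\[
2(d-1)\beta(\phi_i(0)+\delta) + \tfrac12(\phi_i(0)+\delta)^2 \le \bigl(2(d-1)+\tfrac12\bigr)\beta^2 ,
\]
which requires $\phi_i(0)+\delta\le\beta$. That holds because $\phi_i(w^{j\to i}) \ge \phi_i(0)+\delta$ and $\phi_i \le \beta$. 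I would rely on the corollary as stated, and if checking, verify this expansion to confirm $2d-1.5$.

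A second point to handle is the use of $\mathbb{E}_\nu$ of a ratio of conditional probabilities. I would justify it by the almost sure validity of each bound under every history, and in particular under $\nu$.
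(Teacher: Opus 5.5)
Your proposal is correct and follows essentially the same route as the paper. Both substitute the almost-sure bounds of Corollaries~\ref{C1} and~\ref{cor:explicit_bounds} into the definition, cancel the $\phi_i(0)\Delta$ terms, divide by $\delta\Delta$, and pass to the expectation. Your remarks that \eqref{regime} is needed only to read the bounds as a separation, and that the constant $2d-1.5$ relies on $\phi_i(0)+\delta\le\beta$, are both accurate.
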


\begin{proof}
The proof follows from the direct substitution of the second-order expansions established in Corollaries \ref{C1} and \ref{cor:explicit_bounds} into the definition of the expected contrast $\mathcal{G}^{j \to i}(\Delta)$. Since the bounds in the aforementioned corollaries hold almost surely for the random variables $G^{j \to i}(\Delta, \mathcal{F}, \mathcal{F}')$, they are preserved under the expectation operator, thereby providing deterministic bounds for $\mathcal{G}^{j \to i}(\Delta)$.

For all three cases, the fundamental mechanism is the cancellation of the first-order terms proportional to the baseline intensity $\phi_i(0)$. Specifically, when subtracting the baseline ratio from the interaction ratio, the terms of order $O(\Delta)$ involving $\phi_i(0)$ eliminate each other, leaving only the synaptic signal and the accumulated second-order errors.

In the \textbf{No Connection} case, the signal is null. The resulting bounds emerge from the worst-case combination of the asymmetric error terms from both corollaries. The symmetry of the final bound, $\pm(3d + 0.5)$, arises because the lower bound of the interaction term combined with the upper bound of the baseline term yields the same magnitude as the opposite combination (i.e., $(2d+0.5)+d = 2d+(d+0.5) = 3d+0.5$).

For \textbf{Excitatory and Inhibitory} connections, the centering operation isolates the leading term $\pm \delta \Delta$. After dividing by the normalization factor $\delta \Delta$, this leading term yields a base value of $1$ (or $-1$, respectively). The remaining second-order terms, once divided by $\delta \Delta$, become linear in $\Delta$ with coefficients $(3d - 1.5)$. Under the operational condition \eqref{regime}, these fluctuations are strictly dominated by the unit signal, ensuring that $\mathcal{G}^{j \to i}(\Delta) \ge 1 - \frac{(3d-1.5)\beta^2\Delta}{\delta}$ in the excitatory case and $\mathcal{G}^{j \to i}(\Delta) \le -1 + \frac{(3d-1.5)\beta^2\Delta}{\delta}$ in the inhibitory one.
\end{proof}

%
%
%
%

These results allow us to establish a statistical criterion for the identification of excitatory, inhibitory, and null connections.

\begin{corollary}[Topology Identification] \label{cor:identification}
For an observation window $\Delta_A = \frac{\delta}{9d\beta^2}$, the expected contrast $\mathcal{G}^{j \to i}(\Delta_A)$ provides an explicit separation for the network topology identification:

\begin{enumerate}
    \item \textbf{No Connection ($j \notin \mathcal{V}^i$):} 
    \begin{equation}
        |\mathcal{G}^{j \to i}(\Delta_A)| \le \frac{1}{3} + \frac{1}{18d}.
    \end{equation}

    \item \textbf{Excitatory Connection ($j \in \mathcal{V}_{+}^i$):} 
    \begin{equation}
        \mathcal{G}^{j \to i}(\Delta_A) \ge \frac{2}{3} + \frac{1}{6d}.
    \end{equation}

    \item \textbf{Inhibitory Connection ($j \in \mathcal{V}_{-}^i$):} 
    \begin{equation}
        \mathcal{G}^{j \to i}(\Delta_A) \le -\frac{2}{3} - \frac{1}{6d}.
    \end{equation}
\end{enumerate}
\end{corollary}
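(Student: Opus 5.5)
The plan is to specialize Theorem~\ref{TEO-BOUND} to the single window $\Delta = \Delta_A = \frac{\delta}{9d\beta^2}$ and simplify the resulting constants. No new probabilistic argument is needed: the almost-sure bounds of Corollaries~\ref{C1} and~\ref{cor:explicit_bounds} have already been passed through the expectation defining $\mathcal{G}^{j \to i}(\Delta)$ in \eqref{E_G_adapt}. What remains is to check that $\Delta_A$ is admissible and then do the arithmetic.

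First I check the operational condition \eqref{regime}, i.e.\ that $\Delta_A < \Delta_{max} = \frac{\delta}{(6d+1)\beta^2}$. Since $\delta, \beta > 0$, this is equivalent to $9d > 6d + 1$, that is $d > 1/3$. It therefore holds for every integer $d \ge 1$. The case $d \ge 1$ is the only meaningful one: $\Delta_A$ is undefined for $d=0$, and whenever $\mathcal{V}^i \neq \emptyset$ (in particular in the excitatory and inhibitory cases) we have $d \ge |\mathcal{V}^i| \ge 1$. So Theorem~\ref{TEO-BOUND} applies at $\Delta_A$.

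Next I substitute $\Delta_A$ into the three inequalities of Theorem~\ref{TEO-BOUND}. The common factor is $\frac{\beta^2}{\delta}\Delta_A = \frac{1}{9d}$.
\begin{itemize}
\item \emph{Null case.} The bound becomes
\begin{equation*}
|\mathcal{G}^{j \to i}(\Delta_A)| \le \frac{3d + 0.5}{9d} = \frac{1}{3} + \frac{1}{18d}.
\end{equation*}
\item \emph{Excitatory case.} The bound becomes
\begin{equation*}
\mathcal{G}^{j \to i}(\Delta_A) \ge 1 - \frac{3d - 1.5}{9d} = 1 - \frac{1}{3} + \frac{1.5}{9d} = \frac{2}{3} + \frac{1}{6d}.
\end{equation*}
\item \emph{Inhibitory case.} This follows by the symmetric computation, giving $\mathcal{G}^{j \to i}(\Delta_A) \le -\frac{2}{3} - \frac{1}{6d}$.
\end{itemize}

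Finally, I note that these bounds do give an explicit separation. For $d \ge 1$ we have $\frac{1}{3} + \frac{1}{18d} < \frac{2}{3} + \frac{1}{6d}$, so the three ranges are pairwise disjoint. Any threshold strictly between them (for instance $\pm\frac{1}{2}$) distinguishes the three classes.

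The only step needing care is the admissibility check $\Delta_A < \Delta_{max}$, which relies on $d \ge 1$. I expect no real obstacle beyond stating that assumption explicitly.
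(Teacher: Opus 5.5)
Your proposal is correct and follows the same route the paper intends: the corollary has no separate written proof there and is obtained by substituting $\Delta_A$ into Theorem~\ref{TEO-BOUND}, where $\beta^2\Delta_A/\delta = 1/(9d)$ yields the three stated constants. Your explicit check that $\Delta_A < \Delta_{max}$ (equivalently $9d > 6d+1$, which holds for every $d \ge 1$) is left unstated in the paper, and it is exactly what is needed to invoke the theorem's hypothesis.
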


\begin{remark} \normalfont
The thresholds established in Corollary \ref{cor:identification} define clear safety margins for connectivity inference. In the limit of high regularity or large synaptic density ($d \gg 1$), these bounds converge to the simplified decision regions:
\begin{itemize}
    \item \textbf{Null:} $[-1/3, 1/3]$
    \item \textbf{Excitatory:} $[2/3, +\infty)$
    \item \textbf{Inhibitory:} $(-\infty, -2/3]$
\end{itemize}
This asymptotic behavior ensures a "gap" of size at least $1/3$ between the different classes.
\end{remark}

\medskip

\subsection{Comparison with Previous Identification Bounds}

We compare the analytical bounds established in Corollaries \ref{C1} and \ref{cor:explicit_bounds} with the results presented in Lemma 3 of \cite{DeSantis2022}. The estimates derived in this work provide a sharper characterization of the interaction function $\mathcal{G}^{j \to i}(\Delta)$ for the following reasons.

\paragraph{1. Estimator Construction.}
The estimator in \cite{DeSantis2022} is defined on a fixed-time grid, whereas the current estimator is constructed using the stopping times $(\sigma^{j \to i}_k, \tau^i_k)$. By the reset property, this construction fixes the initial intensity at $\lambda_i = \phi_i(0)$, removing the dependence on the history of the process for the target neuron $i$. This leads to sharper analytical bounds and more stable statistics compared to the fixed-grid approach.

\paragraph{2. Analytical Refinement of the Bounds.}
In Lemma 3 of \cite{DeSantis2022}, the error estimates were dependent on the leak rate $\alpha$. The refined analytical treatment in the current work removes the dependence on $\alpha$ from the denominators, providing a more direct representation of the synaptic interaction. 

Comparing the error constants $C_{2022}$ with the constants $C_{2026}$ from Corollaries \ref{C1} and \ref{cor:explicit_bounds}, we observe a systematic reduction in the error terms. Even assuming $\phi_i(0) = \alpha$, which yields the most favorable estimates for the previous framework, the bounds compare as follows:

\begin{enumerate}
    \item \textbf{Base Firing Ratio, $\mathbb{P}(\mathcal{B}^i_k(\Delta) \mid \mathcal{F}_{\tau^i_k}) $:}
    \begin{itemize}
        \item Lower Bound: $C_{2026} = (d + 0.5)\beta^2$ vs $C_{2022} = 3d\beta^2$
        \item Upper Bound: $C_{2026} = d\beta^2$ vs $C_{2022} \ge 4d\beta^3/\alpha$
    \end{itemize}
    \item \textbf{Probability Ratio $\mathbb{P}(\mathcal{D}^{j \to i}_k(\Delta) \mid \mathcal{C}^{j \to i}_k(\Delta), \mathcal{F}_{\sigma^i_k}) $:} 
    \begin{itemize}
        \item \textbf{Excitatory Case ($j \in \mathcal{V}_+^i$):}
        \begin{itemize}
            \item Lower Bound: $C_{2026} = (2d - 1.5)\beta^2$ vs $C_{2022} \ge 5d\beta^3/\alpha$
        \end{itemize}

        \item \textbf{Inhibitory Case ($j \in \mathcal{V}_-^i$):}
        \begin{itemize}
            \item Upper Bound: $C_{2026} = (2d - 2)\beta^2$ vs $C_{2022} \ge 5d\beta^4/\alpha^2$
        \end{itemize}

        \item \textbf{Null Case ($j \notin \mathcal{V}^i$):}
        \begin{itemize}
            \item Lower Bound: $C_{2026} = (2d + 0.5)\beta^2$ vs $C_{2022} \ge 5d\beta^3/\alpha$
            \item Upper Bound: $C_{2026} = 2d\beta^2$ vs $C_{2022} \ge 5d\beta^4/\alpha^2$
        \end{itemize}
    \end{itemize}
\end{enumerate}

\subsection{The Universal Topological Classifier}

Given any observation window $\Delta \in (0, \Delta_{max})$, we define a family of topological classifiers based on the scaled difference between the interaction and baseline empirical estimators. Specifically, the estimated synaptic nature of the link $j \to i$ is determined by the following Statistical Classifier:

\begin{equation} \label{test_statistico_adaptive}
\hat{S}^{j \to i}(m_0,m_1; \Delta) :=
\begin{cases}
1 & \text{if } \hat{\mathcal{G}}^{j \to i}_{m_0, m_1}(\Delta) > \frac{1}{2}, \quad (\text{Excitatory}) \\
-1 & \text{if } \hat{\mathcal{G}}^{j \to i}_{m_0, m_1}(\Delta) < -\frac{1}{2}, \quad (\text{Inhibitory}) \\
0 & \text{if } |\hat{\mathcal{G}}^{j \to i}_{m_0, m_1}(\Delta)| \le \frac{1}{2}, \quad (\text{Null})
\end{cases}
\end{equation}

Crucially, Theorem \ref{TEO-BOUND} guarantees that for any stability condition $\Delta < \Delta_{max}$, the theoretical expected value of $\hat{\mathcal{G}}$ lies within the correct decision region with an absolute safety margin of at least $\eta(\Delta) / \delta\Delta$. This property ensures that the universal classification thresholds $\zeta_{\pm} = \pm 1/2$ are valid across the entire stability range. Consequently, the identification error is exclusively driven by statistical finite-sample fluctuations, which we will rigorously analyze in the next section.

While the thresholds hold for any valid window, the specific choice of the parameter $\Delta$ reflects a fundamental trade-off in the estimator's design. Increasing $\Delta$ is generally desirable from a sampling perspective, as it allows for a larger number of events to be captured within each window, thus improving the convergence rate. However, as shown in Theorem \ref{TEO-BOUND}, the error terms grow linearly with $\Delta$, causing the theoretical safety margins to shrink. 

The value adopted in our framework, $\Delta_A = \frac{\delta}{9d\beta^2}$, represents a compromise: it ensures that the window is wide enough for practical data collection while maintaining a sufficiently large and balanced gap between the theoretical regions. By evaluating the bounds at this specific resolution (Corollary \ref{cor:identification}), the expected values become scale-free and strictly separated:
\begin{equation}\label{ben}
\mathcal{G}^{j \to i}(\Delta_A)
 \in 
\begin{cases} 
\left( \frac{2}{3} + \frac{1}{6d}, + \infty \right) \subset \left( \frac{1}{2} , +\infty \right) & \text{if } j \to i \text{ is excitatory,} \\
\left[ -\frac{1}{3} - \frac{1}{18d}, \frac{1}{3} + \frac{1}{18d} \right] \subset \left[-\frac{1}{2}, \frac{1}{2}\right] & \text{if } j \text{ is not pre-synaptic to } i, \\
\left( -\infty , -\frac{2}{3} - \frac{1}{6d} \right] \subset \left( -\infty , -\frac{1}{2} \right) & \text{if } j \to i \text{ is inhibitory.}
\end{cases}
\end{equation} 

By combining the ergodic limits \eqref{ergodico} and the bounds in \eqref{ben}, we conclude that the statistical classifier $\hat{S}_{ij}(m_0,m_1; \Delta_A)$ is asymptotically correct, ensuring that the probability of a classification error vanishes as $m_0, m_1 \to \infty$. As illustrated in Figure \ref{fig:decision_axis}, for $\Delta = \Delta_A$, the decision space is cleanly partitioned by the thresholds $\zeta_{\pm} = \pm 1/2$, leaving robust safety margins against fluctuations.

\begin{figure}[htbp]
\centering
\begin{tikzpicture}[scale=0.8, transform shape, x=6cm, y=0.7cm, >=Stealth]
    \fill[red!20, opacity=0.4] (-1.7, -0.8) rectangle (-0.5, 0.8);
    \fill[gray!10, opacity=0.4] (-0.5, -0.8) rectangle (0.5, 0.8);
    \fill[green!20, opacity=0.4] (0.5, -0.8) rectangle (1.7, 0.8);

    \draw[latex-latex, thick] (-1.6,0) -- (1.6,0) node[right] {${\mathcal{G}}^{j \to i}(\Delta_A)$};

    \draw[ultra thick, dashed, gray] (-0.5, -1) -- (-0.5, 1) node[above, black] {$-1/2$};
    \draw[ultra thick, dashed, gray] (0.5, -1) -- (0.5, 1) node[above, black] {$1/2$};

    \node at (-1.1, 1.5) {\small \textbf{Inhibitory Region}};
    \node at (0, 1.5) {\small \textbf{Null Region}};
    \node at (1.1, 1.5) {\small \textbf{Excitatory Region}};

    \filldraw[black] (0.35,0) circle (2pt) node[below=2pt] {\scriptsize $\frac{1}{3} + \frac{1}{18d}$};
    \filldraw[black] (-0.35,0) circle (2pt) node[below=2pt] {\scriptsize $-\frac{1}{3} - \frac{1}{18d}$};

    \filldraw[black] (0.65,0) circle (2pt) node[below=2pt] {\scriptsize $\ge \frac{2}{3} + \frac{1}{6d}$};
    \filldraw[black] (-0.65,0) circle (2pt) node[below=2pt] {\scriptsize $\le -\frac{2}{3} - \frac{1}{6d}$};

    \draw[<->, blue, thick] (0.36, 0.4) -- (0.49, 0.4) node[midway, above] {\tiny gap};
    \draw[<->, blue, thick] (0.51, 0.4) -- (0.64, 0.4) node[midway, above] {\tiny gap};
    
    \draw[<->, blue, thick] (-0.36, 0.4) -- (-0.49, 0.4) node[midway, above] {\tiny gap};
    \draw[<->, blue, thick] (-0.51, 0.4) -- (-0.64, 0.4) node[midway, above] {\tiny gap};

    \draw[->, semithick] (-0.6, -1.3) -- (-1.5, -1.3) node[midway, below] {\tiny inhibitory semi-ray};
    \draw[->, semithick] (0.6, -1.3) -- (1.5, -1.3) node[midway, below] {\tiny excitatory semi-ray};
\end{tikzpicture}
\caption{Classification map evaluated at $\Delta_A$. The fixed universal thresholds $\zeta_{\pm} = \pm 1/2$ naturally bisect the scale-free bounds, providing an equidistant safety margin (blue gaps) that ensures robust classification against finite-sample variations.}
\label{fig:decision_axis}
\end{figure}


\section{Consistency of the Statistical Classifier} \label{Sec:6}

In this section, we prove the asymptotic correctness of the statistical classifier $\hat{S}^{j \to i}(m_0,m_1; \Delta)$ introduced in \eqref{test_statistico_adaptive}. Specifically, we provide exponential bounds on the probability of misclassification by analyzing the concentration properties of the estimator $\hat{\mathcal{G}}^{j \to i}_{m_0, m_1}(\Delta)$ defined in \eqref{def_hat_G_cal}.

As illustrated in Figure \ref{fig:decision_axis}, the classification remains correct as long as the fluctuations of the empirical estimator do not exceed the safety gaps between the theoretical bounds \eqref{ben} and the decision thresholds $\pm 1/2$.

For analytical convenience, we study the concentration properties of the two additive terms in \eqref{def_hat_G_cal} separately. Our proof strategy is based on \textit{Stochastic Domination} and \textit{Large Deviation Theory} (LDT), partially following the approach in \cite[Theorem 2]{DeSantis2022}. However, we extend this analysis by considering a fixed number of events in the numerator of the first term, which shifts the analysis from a binomial distribution to a waiting-time distribution. We adopt this approach to maintain a uniform statistical sensitivity when comparing simulations at different resolutions $\Delta$.

\subsection{Concentration of the Interaction Term}

To control the probability that the empirical estimator falls into the wrong decision region, we first analyze the interaction term:
\begin{equation}
	\hat{\Phi}_{m_1}(\Delta) = \frac{\sum_{k=1}^{m_1} \mathbf{1}_{\mathcal{D}^{j \to i}_k(\Delta)}}{\sum_{k=1}^{m_1} \mathbf{1}_{\mathcal{C}^{j \to i}_k(\Delta)}} .
\end{equation}

To formalize the estimation with a fixed number of observed successes $n$, we define the stopping time $M_1(n)$ as:
\begin{equation} \label{def_M1}
	M_1(n) := \inf \left\{ m \in \mathbb{N} : \sum_{k=1}^m \mathbf{1}_{\mathcal{D}^{j \to i}_k(\Delta)} = n \right\}.
\end{equation}

By defining the estimator through $M_1(n)$, the interaction term can be expressed as the inverse of the sample mean of the waiting times:
\begin{equation} \label{interazione_xi}
	\hat{\Phi}_{M_1(n)}(\Delta) = \frac{n}{\sum_{k=1}^{M_1(n)} \mathbf{1}_{\mathcal{C}^{j \to i}_k(\Delta)}} = \frac{n}{\sum_{k=1}^n \xi_k} = (\bar{\xi}_n)^{-1}
\end{equation}
where $\xi_k$ represents the number of steps in $\mathcal{C}$ required to observe successive events in $\mathcal{D}$.

To handle the dependencies within the sequence $\{\xi_k\}$, we employ a stochastic domination argument. Recalling the analysis in Section \ref{Sec:5}, the conditional probability of observing a success in $\mathcal{D}^{j \to i}_k$ satisfies the following uniform bounds for a sufficiently small $\Delta > 0$:
\begin{equation} \label{bound_p_delta_recall}
	a\Delta - b\Delta^2 \le \mP(\mathcal{D}^{j \to i}_k(\Delta) | \mathcal{C}^{j \to i}_k(\Delta), \mathcal{F}_{\sigma^{j \to i}_k}) \le a\Delta + B\Delta^2,
\end{equation}
where the coefficients $a, b, B$ depend on the infinitesimal generator and the synaptic weights, as specified in Corollary \ref{cor:explicit_bounds}. By setting $p_{min} = a\Delta - b\Delta^2$ and $p_{max} = a\Delta + B\Delta^2$, each waiting time $\xi_k$ is stochastically sandwiched between two i.i.d. Geometric random variables:
\begin{equation}
	X_k^{min} \le_{st} \xi_k \le_{st} X_k^{max}
\end{equation}
where $X_k^{min} \sim \text{Geom}(p_{max})$ and $X_k^{max} \sim \text{Geom}(p_{min})$. Consequently, the partial sums satisfy:
\begin{equation} \label{sandwich_sums}
	\sum_{k=1}^{n} X_k^{min} \le_{st} \sum_{k=1}^{n} \xi_k \le_{st} \sum_{k=1}^{n} X_k^{max}.
\end{equation}

This construction allows us to transport the concentration properties of the i.i.d. geometric sequences to our estimator. In this framework, as $\Delta$ decreases, the bounding waiting times scale as $\Delta^{-1}$, effectively balancing the multiplicative factor $(\Delta \delta)^{-1}$ present in the definition of $\hat{\mathcal{G}}^{j \to i}_{m_0, m_1}(\Delta)$. This property facilitates the comparison of different simulation regimes, as the misclassification probability is governed by the sample depth $n$ rather than the specific value of $\Delta$.

\subsection{Concentration and Relative Error}

To evaluate the precision of the estimator $\hat{\Phi}_{M_1(n)}(\Delta)$, we define a relative error threshold $\gamma > 0$. We are interested in the probability that the empirical estimator deviates from the reference value $a\Delta$ by more than a factor $\gamma$.

By virtue of the stochastic domination established in \eqref{sandwich_sums}, the tail probabilities of the non-stationary sequence $\{\xi_k\}$ are controlled by the tail probabilities of the i.i.d. sequences $\{X_k^{min}\}$ and $\{X_k^{max}\}$. Let $\mu_{min} = (a\Delta + b\Delta^2)^{-1}$ and $\mu_{max} = (a\Delta - b\Delta^2)^{-1}$ be the exact expected waiting times for the dominating processes. The probability that the interaction term underestimates the discrete theoretical lower bound (e.g., in the excitatory case) by a relative error $\gamma \in (0,1)$ can be bounded as follows:
\begin{equation} \label{bound_gamma_dominato}
	\mP\left( \hat{\Phi}_{M_1(n)}(\Delta) < (a\Delta - b\Delta^2)(1 - \gamma) \right) \le \mP\left( \frac{1}{n}\sum_{k=1}^n X_k^{max} > \frac{\mu_{max}}{1-\gamma} \right).
\end{equation}

Similarly, the probability of overestimating the discrete theoretical upper bound is controlled by the lower tail of the $X_k^{min}$ process:
\begin{equation}\label{bound_gamma_dominato2}
	\mP\left( \hat{\Phi}_{M_1(n)}(\Delta) > (a\Delta + b\Delta^2)(1 + \gamma) \right) \le \mP\left( \frac{1}{n}\sum_{k=1}^n X_k^{min} < \frac{\mu_{min}}{1+\gamma} \right).
\end{equation}

Since our estimator is defined as the inverse of the sample mean, bounding the tail events of $\hat{\Phi}_{M_1(n)}(\Delta)$ mathematically translates to bounding the tail events of the sample sums. Therefore, by applying Cramér's Theorem for the sums of i.i.d. Geometric random variables, we can establish explicit exponential upper bounds for the probabilities in \eqref{bound_gamma_dominato} and \eqref{bound_gamma_dominato2}.

Recall that for a sequence of i.i.d. random variables $X_k \sim \text{Geom}(p)$, the large deviation rate function $I(x, p)$ is obtained via the Legendre transform of its cumulant generating function, yielding:
\begin{equation} \label{rate_function_geom}
	I(x, p) = x \ln\left(\frac{1}{px}\right) + (x-1) \ln\left(\frac{x-1}{x(1-p)}\right) \quad \text{for } x > 1.
\end{equation}
This rate function satisfies $I(1/p, p) = 0$ and is strictly positive for $x \neq 1/p$.

To obtain explicit exponential bounds, we analyze the rate function at the specific thresholds defined by the relative error $\gamma$. Let $p(\Delta) = a\Delta \pm b\Delta^2$ represent the success probabilities. We consider the thresholds $x_+ = [p_{max}(\Delta)(1-\gamma)]^{-1}$ for the upper tail and $x_- = [p_{min}(\Delta)(1+\gamma)]^{-1}$ for the lower tail. 

We define the composite rate function $H(p, \gamma)$ by evaluating $I(x, p)$ at these points:
\begin{equation}
	H(p, \gamma) := I\left(\frac{1}{p(1-\gamma)}, p\right) = \ln(1-\gamma) + \left( \frac{1}{p(1-\gamma)} - 1 \right) \ln\left( \frac{1-p(1-\gamma)}{1-p} \right).
\end{equation}

In the limit $\Delta \to 0$ (and thus $p \to 0$), the rate function converges to the zeroth-order term, which corresponds to the rate function of a continuous Exponential distribution:
\begin{equation}
	H(0, \gamma) = \lim_{p \to 0} H(p, \gamma) = \ln(1-\gamma) + \frac{\gamma}{1-\gamma}.
\end{equation}

To rigorously justify the adoption of $H(0, \gamma)$ as a conservative bound, we must prove that $H(p, \gamma)$ is strictly bounded from below by $H(0, \gamma)$. We achieve this by analyzing the Taylor expansion of $H(p, \gamma)$ with respect to $p$, centered at $p=0$, with the Lagrange remainder:
\begin{equation}
    H(p, \gamma) = H(0, \gamma) + p \left. \frac{\partial H(\xi, \gamma)}{\partial p} \right|_{p = \xi},
\end{equation}
for some $\xi \in (0, p)$. To show that $H(p, \gamma) > H(0, \gamma)$ for any valid probability $p > 0$, it is sufficient to prove that the first derivative is strictly positive for all $p \in (0,1)$.

Taking the partial derivative of $H(p, \gamma)$ with respect to $p$ yields:
\begin{align}
    \frac{\partial H(p, \gamma)}{\partial p} &= -\frac{1}{p^2(1-\gamma)} \ln\left( \frac{1-p(1-\gamma)}{1-p} \right) + \left( \frac{1}{p(1-\gamma)} - 1 \right) \frac{\gamma}{(1-p)(1-p(1-\gamma))} \nonumber \\
    &= \frac{1}{p^2(1-\gamma)} \left[ \frac{p\gamma}{1-p} - \ln\left( 1 + \frac{p\gamma}{1-p} \right) \right].
\end{align}
Let us define the auxiliary variable $z := \frac{p\gamma}{1-p}$. The derivative can be compactly rewritten as:
\begin{equation}
    \frac{\partial H(p, \gamma)}{\partial p} = \frac{1}{p^2(1-\gamma)} \Big[ z - \ln(1+z) \Big].
\end{equation}

For the upper tail ($\gamma > 0$), we have $z > 0$. For the lower tail, replacing $\gamma$ with $-\gamma$, we have $z \in (-1, 0)$ (since the threshold constraint implies $p(1+\gamma) < 1$). It is a well-known analytical property that the inequality $z > \ln(1+z)$ holds for all $z > -1, z \neq 0$. 

Since both $p^2$ and $(1-\gamma)$ are strictly positive, it follows immediately that $\frac{\partial H(p, \gamma)}{\partial p} > 0$ for all valid $p$ and $\gamma \neq 0$. Consequently, the rate function is strictly monotonically increasing with $p$, implying that the Lagrange remainder is strictly positive. This provides the rigorous guarantee that:
\begin{equation}
    H(p, \gamma) > H(0, \gamma), \quad \forall p \in (0,1).
\end{equation}
Thus, adopting the continuous limit places our bound in the strict worst-case scenario.

 Applying Cramér's Theorem, the concentration inequalities for the dominating processes become:
\begin{equation} \label{cramer_final_upper}
	\mP\left( \frac{1}{n}\sum_{k=1}^n X_k^{max} > \frac{\mu_{max}(\Delta)}{1-\gamma} \right) \le \exp\left( -n \left[ \ln(1-\gamma) + \frac{\gamma}{1-\gamma} \right] \right),
\end{equation}
and similarly for the lower tail:
\begin{equation}\label{cramer_final_lower}
	\mP\left( \frac{1}{n}\sum_{k=1}^n X_k^{min} < \frac{\mu_{min}(\Delta)}{1+\gamma} \right) \le \exp\left( -n \left[ \ln(1+\gamma) - \frac{\gamma}{1+\gamma} \right] \right).
\end{equation}

These bounds demonstrate that the error concentration depends strictly on the sample size $n$, while the discretization step $\Delta$ only shifts the mean values $\mu_{max}, \mu_{min}$.

\begin{figure}[htbp]
    \centering
    \begin{tikzpicture}
        \begin{axis}[
            title={Asymptotic Rate Function $H(0, \gamma)$},
            xlabel={Relative error $\gamma$},
            ylabel={$H(0, \gamma)$},
            xmin=-1, xmax=1,
            ymin=0, ymax=4,
            axis lines=left,
            domain=-0.99:0.8, 
            samples=200,
            thick,
            grid=both,
            grid style={line width=.1pt, draw=gray!20},
            major grid style={line width=.2pt,draw=gray!50},
            legend pos=north west
        ]
        
        \addplot [blue, very thick] {ln(1-x) + x/(1-x)};
        \addlegendentry{$\ln(1-\gamma) + \frac{\gamma}{1-\gamma}$}
        
        \draw[dashed, red] (axis cs:0,0) -- (axis cs:0,4);
        
        \end{axis}
    \end{tikzpicture}
    \caption{Behavior of the zeroth-order rate function $H(0, \gamma)$ for the continuous limit. The pronounced asymmetry illustrates that upper-tail large deviations ($\gamma > 0$) decay much faster than lower-tail deviations ($\gamma < 0$).}
    \label{fig:rate_function}
\end{figure}

Let us isolate the baseline component of the estimator. To maintain perfect symmetry with the interaction term, we define the baseline estimation through a fixed number of observed baseline spikes, $n_0$. We introduce the baseline stopping time:
\begin{equation} \label{def_M0}
	M_0(n_0) := \inf \left\{ m \in \mathbb{N} : \sum_{k=1}^m \mathbf{1}_{\mathcal{B}^i_k(\Delta)} = n_0 \right\}.
\end{equation}

By defining the estimator through $M_0(n_0)$, the baseline term can be naturally expressed as the inverse of the sample mean of the waiting times:
\begin{equation} \label{eq:baseline_estimator}
	\hat{\Phi}^i_{M_0(n_0)}(\Delta) := \frac{n_0}{M_0(n_0)} = \frac{n_0}{\sum_{k=1}^{n_0} \zeta_k} = (\bar{\zeta}_{n_0})^{-1},
\end{equation}
where $\zeta_k$ represents the number of trials (reset events $\mathcal{A}^i$) required to observe successive baseline spikes in $\mathcal{B}^i$. 

As anticipated, this term relies on a mathematical structure identical to the interaction component. $M_0(n_0)$ is stochastically bounded by the sum of $n_0$ independent geometric variables, each corresponding to a success probability $p_{\mathcal{B}}(\Delta) = a_{\mathcal{B}}\Delta \pm b_{\mathcal{B}}\Delta^2$. 

Consequently, its large deviation bounds mirror exactly those derived previously. One simply needs to substitute the sample size $n_1$ with $n_0$ and use the appropriate asymptotic mean $\mu_{\mathcal{B}}$.

\medskip 

In our simulations, we deliberately set $n_0 \gg n_1$. This choice is driven by two fundamental reasons rooted in the physics of the network. First, the baseline activity is intrinsically a node-specific process with its own characteristic timescale. Obtaining $n_1$ interacting events for a specific pair $j \to i$ requires significantly longer observation times compared to observing $n_0$ baseline events for neuron $i$, since the baseline probability $p_{\mathcal{B}}(\Delta)$ is generally much larger than the directed interaction probability. 

Second, this asymmetry in timescales provides a substantial statistical and computational advantage. The highly concentrated baseline estimate $	\hat{\Phi}^i_{M_0(n_0)}(\Delta) $ for a target neuron $i$ can be reused to evaluate the effective connectivity from any arbitrary source neuron $j$, $j'$, and so forth. Since the number of possible directed interactions scales quadratically with the number of nodes, adopting a large $n_0$ ensures a highly accurate, shared baseline term without incurring prohibitive simulation costs.

\subsection{Large Deviation Bounds for Topology Identification}

The concentration inequalities derived for the individual estimators $\hat{\Phi}_{M_1 (n_1)}$ and $	\hat{\Phi}^i_{M_0(n_0)}$ can now be combined to evaluate the robustness of the topological classifier $\hat{S}_{ij}$. For the sake of notational clarity, let us define the continuous-limit rate functions governing the lower and upper tail deviations of the estimators, as derived in \eqref{cramer_final_upper} and \eqref{cramer_final_lower}:
\begin{align}
    \mathcal{I}_-(\gamma) &:= \ln(1-\gamma) + \frac{\gamma}{1-\gamma}, \\
    \mathcal{I}_+(\gamma) &:= \ln(1+\gamma) - \frac{\gamma}{1+\gamma}.
\end{align}
Notice that the asymmetry of the geometric sums inherently yields $\mathcal{I}_-(\gamma) > \mathcal{I}_+(\gamma)$ for any given relative error $\gamma \in (0,1)$, meaning that extreme underestimations decay exponentially faster than overestimations.

To quantify the misclassification probability, we analyze the event where the empirical estimator $\hat{\mathcal{G}}^{j \to i}$ crosses the decision boundaries $\pm 1/2$. Let $\Delta < \Delta_{max}$ be the observation window, and let $\eta(\Delta)$ be the absolute safety margin derived from the strict inequalities in Theorem \ref{TEO-BOUND}:
\begin{equation}
    \eta(\Delta) = \frac{1}{2}\delta\Delta - \left(3d - \frac{3}{2}\right)\beta^2\Delta^2.
\end{equation}

A classification error occurs if the sum of the absolute statistical fluctuations of the interaction and baseline terms exceeds the theoretical gap $\eta(\Delta)$. We distribute this margin using a splitting parameter $\theta \in (0,1)$. To apply the concentration inequalities, we must express these absolute margins as relative errors $\gamma$ by normalizing them with respect to the expected event probabilities.

Notice that the continuous-limit rate functions $\mathcal{I}_-(\gamma)$ and $\mathcal{I}_+(\gamma)$ are strictly monotonically increasing with respect to $\gamma$. Consequently, the upper bounds on the misclassification probability are maximized (i.e., the theoretical worst-case scenario is reached) when the relative error $\gamma$ is minimized. Since $\gamma$ is inversely proportional to the true event probability, the minimum relative error occurs at the maximum possible firing intensity.

To ensure our bounds hold uniformly across the entire network, regardless of the unknown specific resting intensity $\phi_i(0)$, we replace the local intensity with the global upper bound $\beta$. This rigorous worst-case approximation yields the uniform relative fluctuations:
\begin{equation} \label{eq:gamma_defs}
    \gamma_1 := \theta \frac{\eta(\Delta)}{\beta\Delta}, \quad \gamma_0 := (1-\theta) \frac{\eta(\Delta)}{\beta\Delta}.
\end{equation}

\begin{remark} \normalfont In strictly physical terms, since the total instantaneous intensity cannot exceed the global boundary $\beta$, an excitatory connection implies $\phi_i(0) + \delta \le \beta$, which tightens the baseline upper bound to $\phi_i(0) \le \beta - \delta$. While using $\beta - \delta$ in the denominator of \eqref{eq:gamma_defs} would theoretically yield a slightly tighter bound for the relative fluctuations, we deliberately adopt the global parameter $\beta$ to maintain a symmetric and universally valid expression across all topological cases (excitatory, inhibitory, and null).
\end{remark}

\begin{theorem}[Classification Error Bounds] \label{thm:LD_error}
Under the stability condition $\Delta < \Delta_{max}$, let $\theta \in (0,1)$ be a splitting parameter and define the relative fluctuation margins as in \eqref{eq:gamma_defs}.
For any target sample sizes $n_1$ and $n_0$, the probability of misclassifying the synaptic connection $j \to i$ is bounded as follows:
\begin{enumerate}
    \item \textbf{Excitatory case ($j \in \mathcal{V}_+^i$):} The probability of a misclassification is bounded by:
    \begin{equation}
        \mP\left(\hat{S}^{j \to i}(M_0(n_0), M_1(n_1); \Delta) \neq 1\right) \le \exp\left( -n_1 \mathcal{I}_-( \gamma_1 ) \right) + \exp\left( -n_0 \mathcal{I}_+( \gamma_0 ) \right).
    \end{equation}
    
    \item \textbf{Inhibitory case ($j \in \mathcal{V}_-^i$):} The probability of a misclassification is bounded by:
    \begin{equation}
        \mP\left(\hat{S}^{j \to i}(M_0(n_0), M_1(n_1); \Delta) \neq -1\right) \le \exp\left( -n_1 \mathcal{I}_+( \gamma_1 ) \right) + \exp\left( -n_0 \mathcal{I}_-( \gamma_0 ) \right).
    \end{equation}
    
    \item \textbf{Null case ($j \notin \mathcal{V}^i$):} The probability of a false positive identification is bounded by:
    \begin{equation}
        \mP\left(\hat{S}^{j \to i}(M_0(n_0), M_1(n_1); \Delta) \neq 0\right) \le 2 \left[ \exp\left( -n_1 \mathcal{I}_+( \gamma_1 ) \right) + \exp\left( -n_0 \mathcal{I}_+( \gamma_0 ) \right) \right].
    \end{equation}
\end{enumerate}
\end{theorem}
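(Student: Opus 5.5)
The plan is to decompose the misclassification event into deviations of the two ratio estimators and apply the Cramér bounds \eqref{cramer_final_upper}--\eqref{cramer_final_lower} to each. Write $\hat{\mathcal{G}} = (\Delta\delta)^{-1}(\hat{\Phi}_{M_1(n_1)} - \hat{\Phi}^i_{M_0(n_0)})$. Denote by $\pi_1$ and $\pi_0$ the relevant deterministic interaction and baseline bounds from Corollaries \ref{C1} and \ref{cor:explicit_bounds}. Theorem \ref{TEO-BOUND} gives, for the excitatory case, $\pi_1^{\min} - \pi_0^{\max} \ge \delta\Delta - (3d-\tfrac32)\beta^2\Delta^2 = \tfrac12\delta\Delta + \eta(\Delta)$. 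Hence $\{\hat{\mathcal{G}} \le 1/2\}$ forces $\hat{\Phi}_{M_1(n_1)} - \hat{\Phi}^i_{M_0(n_0)} \le \tfrac12\delta\Delta$, which forces either
\[
\hat{\Phi}_{M_1(n_1)} < \pi_1^{\min} - \theta\eta(\Delta)
\quad\text{or}\quad
\hat{\Phi}^i_{M_0(n_0)} > \pi_0^{\max} + (1-\theta)\eta(\Delta).
\]
A union bound then reduces the claim to two one-sided tail estimates.

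Each absolute deviation is converted to a relative one. Since $\pi_1^{\min}, \pi_0^{\max} \le \beta\Delta$ (up to the $O(\Delta^2)$ corrections, which I would absorb using $\Delta<\Delta_{max}$), we have $\theta\eta/\pi_1^{\min} \ge \gamma_1$ and $(1-\theta)\eta/\pi_0^{\max}\ge\gamma_0$. The event $\hat\Phi_{M_1} < \pi_1^{\min}(1-\gamma)$ with $\gamma\ge\gamma_1$ is therefore contained in $\{\hat\Phi_{M_1} < \pi_1^{\min}(1-\gamma_1)\}$. Via \eqref{interazione_xi} and the geometric sandwich \eqref{sandwich_sums}, this is bounded by \eqref{bound_gamma_dominato} and then \eqref{cramer_final_upper}, giving $\exp(-n_1\mathcal{I}_-(\gamma_1))$. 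Monotonicity of $\mathcal{I}_\pm$ in $\gamma$ and the inequality $H(p,\gamma)>H(0,\gamma)$ proved above justify replacing the exact rate by its continuous limit. The baseline term is treated identically: $M_0(n_0)$ is sandwiched between sums of geometrics with parameters $p_{\mathcal{B}}$ given by Corollary \ref{C1}, and the upper tail \eqref{cramer_final_lower} gives $\exp(-n_0\mathcal{I}_+(\gamma_0))$.

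The inhibitory case is the mirror image. The error $\{\hat{\mathcal{G}}\ge -1/2\}$ requires the interaction ratio to overshoot its upper bound, which costs $\mathcal{I}_+(\gamma_1)$, or the baseline to undershoot its lower bound, which costs $\mathcal{I}_-(\gamma_0)$. For the null case there are two error events, $\{\hat{\mathcal{G}}>1/2\}$ and $\{\hat{\mathcal{G}}<-1/2\}$. The null gap $\tfrac12\delta\Delta - (3d+\tfrac12)\beta^2\Delta^2/\ldots$ is what must be compared with $\eta$, so I would verify that the null margin, $\tfrac12 - (3d+0.5)\beta^2\Delta/\delta$, dominates $\eta/(\delta\Delta)$ or else redefine the margin accordingly. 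Each event splits into two tails; since $\mathcal{I}_+\le\mathcal{I}_-$, every tail is bounded by the slower $\mathcal{I}_+$ rate, and summing the four terms gives the factor $2$.

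The main obstacle is making the stochastic domination rigorous. The waiting times $\xi_k$ are not independent, and the conditional success probabilities depend on $\mathcal{F}_{\sigma_k}$. My first attempt is a coupling: at each trial, generate the success indicator from a uniform variable $U_k$ independent of the past, declaring success when $U_k\le p_k$ with $p_k\in[p_{\min},p_{\max}]$ a.s. This makes the counts of successes monotone in the bounding Bernoulli sequences, which yields \eqref{sandwich_sums} pathwise. A secondary issue is matching the $\mathcal{G}$-level gaps, stated for the ergodic mean, with the almost-sure per-trial bounds. Because Corollaries \ref{C1} and \ref{cor:explicit_bounds} hold a.s., I would use those directly rather than the expectation.
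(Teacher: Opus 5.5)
Your route is the paper's route: a union bound over one interaction tail and one baseline tail, a $\theta$-split of the margin, normalization by $\beta\Delta$, and Cram\'er bounds transported through the geometric sandwich. For the excitatory and inhibitory cases it goes through. It is also more careful than the paper's sketch in two useful ways. First, you center the deviations at the almost-sure bounds of Corollaries~\ref{C1} and~\ref{cor:explicit_bounds}, which is what the geometric domination actually controls, rather than at ``expected values'' as the paper's proof phrases it. Second, you supply the uniform-variable coupling that the paper only asserts.

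Two small repairs are needed in those cases. The error event $\{\hat{\mathcal G}\le 1/2\}$ includes equality, so one of your two tail events must be non-strict; this is harmless, since the Chernoff bound covers $\ge$. Also, $\pi_0^{\max}=\phi_i(0)\Delta+d\beta^2\Delta^2\le\beta\Delta$ does not follow from $\Delta<\Delta_{max}$ alone. It needs $\phi_i(0)\le\beta-\delta$, which holds in the excitatory case because $\phi_i(0)+\delta\le\phi_i(w^{j\to i})\le\beta$, combined with $d\beta^2\Delta<\delta$. Alternatively, cap every upper reference probability at $1-e^{-\beta\Delta}\le\beta\Delta$, which is always a valid a.s.\ bound and can only widen the gap.

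The genuine gap is the null case, exactly at the check you deferred, and that check fails. From the two corollaries, $\pi_1^{\max}-\pi_0^{\min}=\pi_0^{\max}-\pi_1^{\min}=(3d+\tfrac12)\beta^2\Delta^2$. So the available margin is $\eta_0(\Delta):=\tfrac12\delta\Delta-(3d+\tfrac12)\beta^2\Delta^2=\eta(\Delta)-2\beta^2\Delta^2$, strictly smaller than $\eta(\Delta)$. Hence $\{\hat{\mathcal G}>1/2\}$ only forces $\hat\Phi_{M_1(n_1)}>\pi_1^{\max}+\theta\eta_0$ or $\hat\Phi^i_{M_0(n_0)}<\pi_0^{\min}-(1-\theta)\eta_0$, and symmetrically for $\{\hat{\mathcal G}<-1/2\}$.

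Moreover, the null case ($j\notin\mathcal V^i$) does not force $\phi_i(0)\le\beta-\delta$. So $\pi_1^{\max}$ (resp.\ $\pi_0^{\max}$) can reach $\beta\Delta$. In those configurations the relative margins feeding the two $\mathcal I_+$ tails are as small as $\theta\eta_0/(\beta\Delta)<\gamma_1$ and $(1-\theta)\eta_0/(\beta\Delta)<\gamma_0$. Since $\mathcal I_\pm$ are increasing, this gives a strictly weaker bound than item 3 claims. ``Redefining the margin'' therefore proves a different statement: the null bound with $\eta$ replaced by $\eta_0$ in \eqref{eq:gamma_defs}. Note that $\eta_0>0$ precisely when $\Delta<\Delta_{max}$, so \eqref{regime} is calibrated to this null margin, not to $\eta$.

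The paper's proof disposes of the null case ``by symmetric arguments'' and does not address this asymmetry either. Neither argument establishes item 3 with the $\gamma_1,\gamma_0$ of \eqref{eq:gamma_defs}; this method supports it only with $\eta_0$ in place of $\eta$.
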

\begin{proof}
We detail the proof for the excitatory case ($j \in \mathcal{V}_+^i$); the inhibitory and null cases follow by symmetric arguments. 

By the definition of the classifier in \eqref{test_statistico_adaptive}, a misclassification for an excitatory synapse occurs if $\hat{S}^{j \to i} \neq 1$, which strictly corresponds to the event $\hat{\mathcal{G}}^{j \to i}_{M_0, M_1}(\Delta) \le 1/2$. 
Recall that the estimator $\hat{\mathcal{G}}$ is constructed as the difference of the interaction and baseline empirical means. By construction of the safety margin $\eta(\Delta)$ and the splitting parameter $\theta$, for the estimated difference to fall below the threshold $1/2$, at least one of two rare events must occur: either the interaction estimator fluctuates below its theoretical expected value by a relative margin greater than $\gamma_1$, or the baseline estimator fluctuates above its expected value by a relative margin greater than $\gamma_0$. 

Using the union bound, the probability of this union of events is bounded by the sum of their individual probabilities, regardless of any temporal correlations between the two observation processes. Invoking the large deviation bounds derived previously for the individual empirical means, we obtain:
\begin{equation*}
    \mP\left(\hat{S}^{j \to i} \neq 1\right) \le \exp\left( -n_1 \mathcal{I}_-( \gamma_1 ) \right) + \exp\left( -n_0 \mathcal{I}_+( \gamma_0 ) \right).
\end{equation*}

The inhibitory case follows identically by reversing the signs. For the null case ($j \notin \mathcal{V}^i$), a false positive corresponds to $|\hat{\mathcal{G}}^{j \to i}_{M_0, M_1}(\Delta)| \ge 1/2$. Because the decision region is bounded on both sides, the estimator can exceed the threshold in either the positive or negative direction. Applying the union bound across both directions for both estimators yields the symmetric factor of $2$ in the final bound.
\end{proof}

\begin{remark}[Uniformity of the Bounds]  \label{rem-uniforme}\normalfont
The error bounds derived in 
Theorem \ref{thm:LD_error} depend on the discretization step $\Delta$ through the relative fluctuation margins $\gamma_1$ and $\gamma_0$. To better highlight the structural properties of these margins, let us introduce the dimensionless parameter $\tau := \delta/\beta$ (following the notation introduced in \cite{DeSantis2022}). Physically, $\tau$ acts as the intrinsic signal-to-noise ratio of the network: it is strictly invariant under any global time-rescaling of the transition rates (which merely accelerates or slows down the temporal dynamics without altering the statistical distinguishability of the events), thus capturing the true physical strength of a synaptic interaction.

By substituting the definition of the safety gap $\eta(\Delta)$ into \eqref{eq:gamma_defs}, we can make the dependence on $\Delta$ explicit:
$$
    \frac{\eta(\Delta)}{\beta\Delta} = \frac{1}{\beta} \left[ \frac{\delta}{2} - \left(3d - \frac{3}{2}\right)\beta^2\Delta \right] = \frac{\tau}{2} - \left(3d - \frac{3}{2}\right)\beta\Delta.
$$

Since the coefficient of $\Delta$ is strictly negative, the relative margins $\gamma_1(\Delta)$ and $\gamma_0(\Delta)$ are monotonically decreasing functions of the window size. 
This monotonicity allows us to establish uniform, numerical lower bounds valid for any window $\Delta \le \Delta_A$. By evaluating the margins at $\Delta_A := \frac{\delta}{9d\beta^2} = \frac{\tau}{9d\beta}$, we obtain the worst-case dimensionless gap:
$$
    \frac{\eta(\Delta_A)}{\beta\Delta_A} = \frac{\tau}{2} - \left(3d - \frac{3}{2}\right)\beta \left(\frac{\tau}{9d\beta}\right) = \tau \left[ \frac{1}{2} - \frac{3d - 1.5}{9d} \right] = \tau \left[ \frac{1}{2} - \frac{1}{3} + \frac{1}{6d} \right] = \frac{\tau}{6} \left( 1 + \frac{1}{d} \right).
$$

Therefore, for any valid choice of $\Delta \le \Delta_A$, the relative fluctuations are strictly bounded from below by network-dependent, dimensionless constants:
$$
    \gamma_1 \ge \theta \frac{\tau}{6}\left( 1 + \frac{1}{d} \right), \quad \gamma_0 \ge (1-\theta) \frac{\tau}{6}\left( 1 + \frac{1}{d} \right).
$$

Since the rate functions $\mathcal{I}_-$ and $\mathcal{I}_+$ are strictly monotonically increasing, evaluating them at these uniform lower bounds yields a universal exponential decay rate. This proves that the statistical consistency of the classifier is universally guaranteed and structurally decoupled from the specific choice of $\Delta$, provided the stability condition $\Delta \le \Delta_A$ is satisfied.
\end{remark}

\subsection{Sample Complexity and Physical Observation Time}

The large deviation bounds derived in Theorem \ref{thm:LD_error} allow us to establish the information-theoretic complexity of the network reconstruction problem. By inverting the exponential bounds, we can determine the minimum number of events required to achieve a target confidence level and, consequently, the expected physical time needed for the experiment.

To simplify the analysis and provide a conservative estimate, we fix the relative fluctuation margins $\gamma_1$ and $\gamma_0$ based on the reference window $\Delta_A$, even when evaluating the performance for smaller observation windows $\Delta < \Delta_A$.

\begin{corollary}[Expected Observation Time] \label{cor:time_complexity}
Given a target error $\epsilon > 0$ and $\Delta \leq \Delta_A $, let $\gamma_1$ be the relative margin at $\Delta_A$. The expected physical time $\mathbb{E}[T_{obs}]$ required to identify a connection with confidence $1-\epsilon$ is bounded by:
\begin{equation} \label{eq:T_obs_complexity}
    \mathbb{E}[T_{obs}(\epsilon, \Delta)] \le \frac{\ln(2/\epsilon)}{\alpha^3 \Delta^2 \mathcal{I}_{-}(\gamma_1)}.
\end{equation}
\end{corollary}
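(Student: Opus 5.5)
We follow the natural two-step route: first invert the exponential bound of Theorem \ref{thm:LD_error} to obtain the required number of interaction successes $n_1$, then convert this number of events into an expected physical time by controlling the mean duration of the trials needed to collect $n_1$ successes in $\mathcal{D}^{j \to i}$.

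\textbf{Step 1 (sample complexity).} We allocate the error budget $\epsilon$ equally between the two exponential terms of Theorem \ref{thm:LD_error}. The baseline term is handled by the choice $n_0 \gg n_1$ discussed above, so we treat it as contributing at most $\epsilon/2$. For the interaction term we require $\exp(-n_1 \mathcal{I}_-(\gamma_1)) \le \epsilon/2$, which yields
\[
n_1 = \left\lceil \frac{\ln(2/\epsilon)}{\mathcal{I}_-(\gamma_1)} \right\rceil .
\]
By Remark \ref{rem-uniforme}, fixing $\gamma_1$ at its value at $\Delta_A$ is conservative for every $\Delta \le \Delta_A$, because $\gamma_1(\Delta)$ decreases in $\Delta$ and $\mathcal{I}_-$ is increasing. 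We use $\mathcal{I}_-$ as in the statement. For the inhibitory and null cases, one replaces it by the smaller rate, or notes that the statement is meant for the worst case written.

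\textbf{Step 2 (time per success).} Each interaction trial begins at a spike $\sigma^{j\to i}_k$ of neuron $i$. We split the time to collect $n_1$ successes into three factors and bound each.

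1. \emph{Time per trial.} The waiting time from the end of one trial to the next spike of $i$ has mean at most $1/\alpha$, since $\lambda_i \ge \alpha$. The trial itself lasts at most $2\Delta$. So the expected duration of a trial is at most $1/\alpha + 2\Delta \lesssim 1/\alpha$.

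2. \emph{Trials per $\mathcal{C}$-event.} The event $\mathcal{C}$ requires a spike of $j$ within $\Delta$. Since $\lambda_j \ge \alpha$, this has probability at least $1-e^{-\alpha\Delta} \gtrsim \alpha\Delta$. The number of trials per $\mathcal{C}$-event is therefore stochastically dominated by a geometric variable with mean of order $1/(\alpha\Delta)$.

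3. \emph{$\mathcal{C}$-events per success.} Given $\mathcal{C}$, the conditional success probability is at least of order $\alpha\Delta$, by the lower bounds of Lemma \ref{L2} combined with $\phi_i \ge \alpha$. The number of $\mathcal{C}$-events per success is thus dominated by a geometric variable with mean of order $1/(\alpha\Delta)$.

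Multiplying the three factors, the expected time per success is at most of order $1/(\alpha^3\Delta^2)$. Wald's identity applied to the stopping time $M_1(n_1)$ then gives $\mathbb{E}[T_{obs}] \le n_1/(\alpha^3\Delta^2)$ up to constants. Combining this with Step 1 yields \eqref{eq:T_obs_complexity}.

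\textbf{Main obstacle.} The delicate point is justifying Wald's identity and the product of means despite the dependence between trials. We would proceed in two stages:

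1. Use the stochastic domination argument of \eqref{sandwich_sums}, conditioning at each stopping time $\sigma^{j\to i}_k$. The lower bounds on the success probabilities hold almost surely given $\mathcal{F}_{\sigma^{j\to i}_k}$, so the counts are dominated by i.i.d. geometric variables.

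2. Bound the inter-trial durations by i.i.d. exponential variables with rate $\alpha$ plus $2\Delta$.

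A second difficulty is that the constants must come out exactly as $\alpha^3\Delta^2$, with no extra factor. For this we would use the sharper bounds $1-e^{-x} \ge x e^{-x}$ and absorb the $O(\Delta)$ corrections via $\Delta \le \Delta_A$. If this fails, we would state the bound with an explicit $(1+O(\Delta))$ factor.
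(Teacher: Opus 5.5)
Your proposal follows the paper's proof, made more careful: the paper also inverts Theorem~\ref{thm:LD_error} to get $n_1\ge\ln(2/\epsilon)/\mathcal{I}_-(\gamma_1)$, and then simply asserts a pattern rate $\lambda_{\mathrm{eff}}\approx\alpha^3\Delta^2$ with $\mathbb{E}[T_{obs}]\approx n_1/\lambda_{\mathrm{eff}}$; your three-factor decomposition justifies this up to a factor $1+O(\alpha\Delta)$, and the dependence you worry about is already handled by conditional Wald applied to the nonnegative durations $\sigma^{j\to i}_{k+1}-\sigma^{j\to i}_k$, whose conditional means given $\mathcal{F}_{\sigma^{j\to i}_k}$ are at most $1/\alpha+2\Delta$. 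Your plan to remove that factor via $1-e^{-x}\ge xe^{-x}$ cannot work, because each of your three bounds ($1/\alpha+2\Delta$ and twice $1/(1-e^{-\alpha\Delta})$) already exceeds its idealized value, so exactness would need a margin that the crude bound $\phi\ge\alpha$ discards (e.g.\ $\phi_i(0)+\delta\ge\alpha+\delta$ in the excitatory case), and without one the displayed inequality is false (for a null pair of independent rate-$\alpha$ Poisson neurons the per-trial success probability is $(1-e^{-\alpha\Delta})^2<\alpha^2\Delta^2$ while each trial lasts at least $1/\alpha$ on average) --- hence your $(1+O(\Delta))$ fallback is the right conclusion, and it is exactly what the paper's ``$\approx$'' delivers.
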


\begin{proof}
To ensure the error is bounded by $\epsilon$, the required number of independent events is $n \ge \ln(2/\epsilon) / \mathcal{I}_{-}(\gamma_1)$. In the high-resolution regime ($\Delta \ll 1/\alpha$), a single "success" requires the coincidence of arrivals within a pattern of total width $2\Delta$. Given the lower bound $\alpha$ on firing intensities, the effective rate of these patterns is $\lambda_{eff} \approx \alpha^3 \Delta^2$. 

The expected physical time to collect one such pattern is $\mathbb{E}[T_1] = 1/\lambda_{eff}$. Since $T_{obs} \gg 2\Delta$, the total time required to accumulate $n$ events is dominated by the inter-event times, $\mathbb{E}[T_{obs}] \approx n \cdot \mathbb{E}[T_1]$, leading to the upper bound in \eqref{eq:T_obs_complexity}.
\end{proof}

\begin{remark}[Sufficient Scaling Law for Reconstruction] \normalfont
As a direct consequence of the uniform bounds established above, we can make the dependence of the guaranteed observation time on the intrinsic dimensionless parameters of the network explicit. 

For small values of the relative margins, the large deviation rate function exhibits a locally quadratic lower bound, $\mathcal{I}_-(\gamma_1) = \mathcal{O}(\tau^2)$. Consequently, the sample size sufficient to bound the error scales as $n \propto 1/\tau^2$. By substituting this into the physical time bound of Corollary \ref{cor:time_complexity}, we obtain a preliminary upper bound for a generic, fixed window $\Delta$:
$$
    \mathbb{E}[T_{obs}(\Delta)] = \mathcal{O}\left( \frac{1}{\alpha^3 \Delta^2 \tau^2} \right).
$$

However, the observation window cannot be arbitrarily fixed. To rigorously guarantee statistical consistency and prevent systematic bias, the stability condition requires shrinking the window proportionally to the signal, $\Delta \le \Delta_A = \tau / (9d\beta)$. 
By substituting $\Delta_A$ into the temporal bound and recalling the dimensionless dynamic range parameter $s := \alpha/\beta \in (0,1]$ (as in \cite{DeSantis2022}), we extract the timescale $1/\beta$ and establish a structural upper bound on the expected observation time:
$$
    \mathbb{E}[T_{obs}(\Delta_A)] \le \mathcal{O}\left( \frac{1}{\alpha^3 \left(\frac{\tau}{9d\beta}\right)^2 \tau^2} \right) = \mathcal{O}\left( \frac{1}{\beta} \frac{d^2}{s^3 \tau^4} \right).
$$

This factorization elegantly separates the strictly physical timescale from the information-theoretic upper bound. The prefactor $1/\beta$ carries the physical dimensions of time, acting as the basal timescale of the network's background activity. The dimensionless block dictates a sufficient ``simulation time'' to achieve exponential concentration. 

Notably, this guaranteed complexity bound exhibits a severe \textit{quartic} dependence on the inverse signal-to-noise ratio ($1/\tau^4$). While this is fundamentally an upper bound and the exact physical time might scale less severely, this quartic form effectively captures a compound penalty intrinsic to our bounding technique for point-process reconstruction. Specifically, a weaker connection not only inflates the required sample size due to tighter statistical margins (the $1/\tau^2$ factor), but physically forces a proportional shrinking of the valid observation window ($\Delta_A \propto \tau$). This shrinking drastically reduces the expected occurrence rate of joint synaptic patterns, introducing a second quadratic toll. Thus, the bound fully quantifies a sufficient informational toll required to overcome background interference, scaling quadratically with the topological degree $d$ and with the inverse cube of the dynamic range ($1/s^3$).
\end{remark}

%

\subsection{Numerical Example: Practical Implementation}

To illustrate the implications of our theoretical bounds, we consider a generic high-activity neural network reconstruction task with the following reference parameters: $\alpha = 20$ Hz, $\beta = 30$ Hz, $\delta = 12$ Hz, and a maximum local degree $d = 2$. This yields an intrinsic signal-to-noise ratio $\tau = 0.4$.

We determine the necessary observation time to guarantee a target classification error $\epsilon = 5\%$ using the rigorous upper bound derivations. First, we compute the maximum stability window:
$$
    \Delta_A = \frac{\tau}{9d\beta} = \frac{0.4}{9 \cdot 2 \cdot 30} \approx 7.4 \times 10^{-4} \text{ s} \approx 0.74 \text{ ms}.
$$

By setting the threshold parameter $\theta = 0.9$ to heavily favor the alternative hypothesis, the uniform worst-case margin evaluated at $\Delta_A$ is:
$$
    \gamma_1 = \theta \frac{\tau}{6}\left( 1 + \frac{1}{d} \right) = 0.9 \cdot \frac{0.4}{6} \cdot 1.5 = 0.09.
$$

To avoid the underestimation of the decay rate caused by quadratic Taylor approximations, we evaluate the exact Cramér rate function derived in \eqref{cramer_final_lower} for the lower tail:
$$
    \mathcal{I}_-(0.09) = \ln(1 - 0.09) + \frac{0.09}{1 - 0.09} \approx 0.00459.
$$

The sufficient sample size to ensure confidence $1-\epsilon$ is therefore:
$$
    n_1 \ge \frac{\ln(2/0.05)}{0.00459} \approx 804 \text{ events}.
$$

Applying the physical time bound of Corollary \ref{cor:time_complexity}, the effective rate of joint patterns within the window $\Delta_A$ is $\lambda_{eff} = \alpha^3 \Delta_A^2 \approx 4.38 \times 10^{-3}$ Hz. The structural upper bound on the expected physical time is:
$$
    \mathbb{E}[T_{obs}] \le \frac{804}{4.38 \times 10^{-3}} \approx 183,560 \text{ s} \approx 51 \text{ hours}.
$$

\begin{remark}[Worst-case vs. Practical Performance] \normalfont
The derived bound of $\mathbb{E}[T_{obs}] \approx 51$ hours serves as a \textbf{rigorous sufficient condition} ensuring convergence under maximally adversarial conditions. In practical simulations, however, successful reconstruction is typically achieved in a fraction of this time due to a cascade of conservative analytical steps inherent to our bounding technique:
\begin{enumerate}
    \item \textbf{Accumulation of Worst-Case Bounds:} Our analytical framework rigorously stacks worst-case assumptions at multiple levels. From bounding the global topological interference with the maximum degree $d$, to the intrinsically conservative nature of large deviation inequalities. Furthermore, the maximum stability window $\Delta_A$ is an extreme analytical limit; in reality, the statistical gap does not vanish abruptly for $\Delta > \Delta_A$, allowing the safe empirical use of significantly larger windows.
    \item \textbf{Dynamical Forgiveness:} The theoretical bounds conservatively assume that background noise completely masks the synaptic jump unless the membrane potential is perfectly unperturbed. In practice, the linear gain induced by $\delta$ leaves a statistically detectable signature even when the baseline potential fluctuates above zero.
    \item \textbf{Macro-Micro Extrapolation:} Evaluating the empirical estimator $\hat{\mathcal{G}}(\Delta)$ across multiple window sizes and extrapolating for $\Delta \to 0$ isolates the true synaptic jump from the baseline noise, effectively acting as a powerful computational denoising filter.
\end{enumerate}
Therefore, the derived theoretical time scale represents the fundamental structural cost of absolute certainty, whereas practical simulations comfortably exploit the average-case physical dynamics of the network.
\end{remark}


\section{Numerical Validation of the Spike-Triggered Estimator} \label{Sec:7} 

In this section, we provide a numerical validation of the inference framework. The objective is to verify that the Spike-Triggered Estimator $\hat{\mathcal{G}}^{j \to i }_{M_0, M_1}(\Delta)$ converges correctly and that the decision thresholds $ \pm 1/2$ provide a robust separation of the connectivity types under the predicted sample complexity.

\subsection{Simulation Setup} 
We adopt a global activation function $\phi(u)$ for all neurons $i \in I$, defined as the following piecewise linear function:
\begin{equation} \label{function:phi}
\phi(u) =\begin{cases}
\alpha & \text{if } u \le u_{low}, \\ 
\beta  & \text{if } u \ge u_{high}, \\ 
\alpha + (u - u_{low})\frac{\beta - \alpha}{u_{high} - u_{low}} & \text{if } u_{low} < u < u_{high}.
\end{cases}
\end{equation}  
For these simulations, we set $\alpha = 1$, $\beta = 5$, $u_{low} = -2.0$, and $u_{high} = 2.0$. This configuration implies $\phi(0) = 3$ (the firing intensity at the reset potential). To validate the classification framework, we select synaptic weights $w^{j \to i} \in \{1, -1, 0\}$ for excitatory, inhibitory, and null connections, respectively.

The system is composed of $N$ nodes, where Node 0 is the target (post-synaptic) neuron, while the remaining nodes act as independent "driver" nodes (Poisson processes with intensity 3) that exert influence only on Node 0.

\subsection{Results and Discussion}
The numerical results, summarized in Table \ref{tab:combined_results}, confirm that the Spike-Triggered Estimator acts as a reliable topological classifier. In our simulations, we maintained a sparse connectivity with a local in-degree $d=2$ or $d=4$.

The results verify a key theoretical intuition: the systematic error (bias) of the estimator depends primarily on the local neighborhood $\mathcal{V}^i$ and the window $\Delta$, rather than the global size $N$. As long as $d$ remains small, the quadratic interference term $d\beta^2\Delta^2$ remains controlled, and the separation gap between different connection types remains sharp. 

The increase in the number of triggers to $n_1 = 300$ for the $N=10$ case was used to ensure robust statistical convergence. The estimates for null links remain consistently near zero ($|\hat{\mathcal{G}}| \approx 0.02$), proving that the estimator correctly ignores "indirect" connections.

\begin{table}[ht]
\centering
\caption{Classification results for the Spike-Triggered Estimator across different network sizes and window widths.}
\label{tab:combined_results}
\begin{tabular}{lcccccc}
\hline
\textbf{N} & \textbf{$n_1$} & \textbf{Link Type} & $\Delta$ & \textbf{$\hat{\mathcal{G}}$} & \textbf{Threshold} & \textbf{Outcome} \\ \hline
$4$  & 100 & Excitatory ($+1$) & 0.055 & 0.8493  & $\pm 0.5$ & \checkmark Correct \\
$4$  & 100 & Null ($0.0$)       & 0.055 & 0.0472  & $\pm 0.5$ & \checkmark Correct \\
$4$  & 100 & Excitatory ($+1$) & 0.009 & 0.9285  & $\pm 0.5$ & \checkmark Correct \\
$10$ & 300 & Excitatory ($+1$) & 0.009 & 1.1173  & $\pm 0.5$ & \checkmark Correct \\
$10$ & 300 & Inhibitory ($-1$) & 0.009 & -0.8316 & $\pm 0.5$ & \checkmark Correct \\
$10$ & 300 & Null ($0.0$)       & 0.009 & -0.0276 & $\pm 0.5$ & \checkmark Correct \\ \hline
\end{tabular}
\end{table}

While we verified that for $d \leq 4$ the estimator correctly classifies links using window sizes $\Delta \approx \frac{\delta}{\beta^2 d}$, adhering to the ultra-fine resolution $\Delta_A$ suggested by the conservative bounds of Corollary \ref{cor:identification} poses significant practical limitations. For instance, with $d=9$, $\Delta_A \approx 0.00049$ would necessitate computationally prohibitive simulation timescales. 

For this reason, the single-scale Spike-Triggered approach serves as the foundation for the Macro-Micro methods, which overcome these limitations by investigating the asymptotic behavior of $\hat{\mathcal{G}}^{j \to i} (\Delta)$ as $\Delta \to 0^+$, allowing for robust inference even with broader, more practical windows.


\section{The Macro-Micro Strategy}\label{Sec:8}

The proposed inference framework aims to reconstruct microscopic synaptic connections from observations at macroscopic time scales. This is achieved by evaluating the Spike-Triggered Estimator $\hat{\mathcal{G}}^{j \to i}_{M_0, M_1}(\Delta_k)$  over a multi-scale temporal grid 
$\{\Delta_k\}_{k=1}^5$ and employing an adaptive extrapolation-averaging logic to recover the synaptic signal.

\subsection{Multi-Scale Calibration} 
The temporal scales are chosen in a geometric progression $\Delta_k = (\sqrt{2})^{k-1} \Delta_1$. The initial window $\Delta_1$ is analytically determined to ensure the algorithm operates within a statistically significant yet linear regime:
\begin{equation}\label{Delta_1}
  \Delta_1 = \underbrace{\frac{1}{\beta d}}_{\text{Noise Scale}} \cdot \underbrace{\frac{\delta}{\beta}}_{\text{Sensitivity}} \cdot \underbrace{\frac{\beta - \alpha}{2\delta}}_{\text{Survival distance}} = \frac{\beta - \alpha}{2 d \beta^2}.
\end{equation}
By exploiting the knowledge of the intensity function $\phi(u)$, this calibration ensures that $\Delta_1$ is large enough to provide signal but small enough to avoid the saturation boundaries of the membrane potential.

\subsection{The Extrapolation Models}
To decouple the microscopic signal from macroscopic interference, we consider two complementary approaches for analyzing the evolution of $\mathcal{G}^{j \to i}(\Delta)$.

\paragraph{1. Local Quadratic Approximation}
We assume that for small $\Delta$, the gain function follows a second-order expansion:
\begin{equation} \label{ba}
    {\mathcal{G}}^{j \to i} (\Delta) \approx a \Delta^2 + b \Delta + c,
\end{equation}
where $c$ represents the direct synaptic effect, $b\Delta$ accounts for first-order interference from other presynaptic neurons, and $a\Delta^2$ aggregates global network noise and non-linearities.

\paragraph{2. Pyramid Extrapolation ($P$)}
To avoid the fragility and overfitting risks of direct parabolic fitting, we employ a recursive geometric construction designed for robust intercept recovery. This method, termed \textbf{Pyramid Extrapolation}, operates through two phases:

\begin{itemize}
    \item \textbf{Contraction via Barycentric Iteration:} Let $P^{(0)}_k = (\Delta_k, \hat{\mathcal{G}}^{j \to i}_{M_0, M_1}(\Delta_k))$ for $k=1, \dots, 5$ be the initial multi-scale dataset. At each iteration step $m \in \{1, 2, 3\}$, the algorithm computes the midpoints of adjacent pairs from the previous step:
    \begin{equation}
        P^{(m)}_k = \frac{P^{(m-1)}_k + P^{(m-1)}_{k+1}}{2}, \quad \text{for } k=1, \dots, 5-m.
    \end{equation}
    After three iterations, this hierarchical averaging condenses the five initial samples into exactly two robust ``meta-points'', denoted as $A = P^{(3)}_1$ and $B = P^{(3)}_2$. 
    
    \item \textbf{Linear Projection and Intercept Recovery:} The final estimate $P$ is defined as the y-intercept of the line passing through meta-points $A$ and $B$.
\end{itemize}

The hierarchical structure of the averages captures the \textit{implicit curvature} of the underlying gain function. This provides an intrinsic regularization of the signal, effectively acting as a discrete low-pass filter. It allows the model to track the fundamental trend required for synaptic classification while remaining resilient to high-frequency stochastic fluctuations (outliers) that plague measurements at extremely small scales.

\subsection{Hybrid Decision Logic: Pyramid vs. Mean} 
The core of our strategy lies in the adaptive selection between the Pyramid Extrapolation ($P$) and the Sample Mean ($M$) of the gains. Let $\mathcal{S} = \{+1, 0, -1\}$ be the set of theoretical synaptic states. The final index $I$ is chosen via a distance-minimization rule:
\begin{equation}
I = \arg\min_{X \in \{P, M\}} \left( \text{dist}(X, \mathcal{S}) \right).
\end{equation}

This hybrid approach balances stability and sensitivity based on the network's regime:
\begin{itemize}
    \item \textbf{The Mean ($M$)} is superior in \textbf{balanced noise regimes}, where stochastic fluctuations tend to cancel out. It provides a grounded, stable estimate, especially for null ($w=0$) links.
    \item \textbf{The Pyramid ($P$)} is essential in \textbf{unbalanced regimes} (e.g., predominantly excitatory drive). The background activity biases the gains; the pyramid effectively ``strips away'' this background slope, allowing the true micro-scale intercept to emerge.
\end{itemize}

\subsection{Classification} 
To prioritize the suppression of false positives, we adopt a conservative thresholding logic with an explicit safety buffer:
\begin{equation}      
\hat{\Sigma}_{ij} :=
\begin{cases} 
1 & \text{if } I > 5/8, \\
-1 & \text{if } I < -5/8, \\
0 & \text{if } |I| \le 5/8.
\end{cases}
\end{equation}
While a symmetric threshold at $\pm 1/2$ would be theoretically sufficient, the expansion of the null region to $\pm 5/8$ ensures that residual stochastic fluctuations in larger networks do not trigger misclassifications.


\section{Numerical Validation for Macro-Micro Extrapolation}\label{Sec:9}

\noindent
To validate the robustness of our framework under extreme conditions, we consider a target neuron ($i=0$) influenced by three presynaptic units. Neuron 1 maintains a constant intensity $ \frac{\alpha + \beta}{2}$, while Neurons 2 and 3 aggregate the activity of excitatory ($L_+$) and inhibitory ($L_-$) populations, respectively. 

This simplified architecture, illustrated in Fig. \ref{fig:min_arch}, provides a controlled environment to stress the framework under conditions of extremely low Signal-to-Noise Ratio (SNR). By setting $L_+, L_- \gg \beta$, we simulate a dominant stochastic background drive that would typically obscure synaptic signals in standard observation windows.

\begin{figure}[htbp] 
    \centering
    \begin{tikzpicture}[
        node distance=1.5cm,
        every node/.style={circle, draw, minimum size=0.6cm, font=\small},
        edge/.style={->, >={Stealth[length=2.5mm]}, thick}
    ]
        \node (0) {0};
        \node (1) [above left=of 0, label=left:{$\frac{\alpha+\beta}{2}$}] {1};
        \node (2) [above=of 0, label=above:{$L_+$}] {2};
        \node (3) [above right=of 0, label=right:{$L_-$}] {3};
    
        \draw[edge] (1) -- node[below left, draw=none] {$w$} (0);
        \draw[edge] (2) -- node[right, draw=none] {$+1$} (0);
        \draw[edge] (3) -- node[below right, draw=none] {$-1$} (0);
    \end{tikzpicture}
    \caption{Minimal network architecture for validation.}
    \label{fig:min_arch}
\end{figure}

\subsection*{Computational Advantages of the Minimal Network Architecture}

The selection of this topology offers significant advantages for event-driven simulations using Lewis' thinning algorithm:

\begin{enumerate}
    \item \textbf{Reduced State Update Overhead:} Since the intensity functions follow predictable trajectories between spikes, the calculation of $\lambda_{max}$ becomes computationally trivial, minimizing the overhead of variable tracking.
    \item \textbf{Efficiency in Event Attribution:} In this 4-neuron configuration, the search for the triggered unit among candidate spikes requires evaluating only four probabilities, ensuring that the computational cost per event remains extremely low.
    \item \textbf{Immediate Realization of the Reset State:} By assuming an initial state $u=0$ for the target neuron, we bypass the initial transient phases. Since presynaptic inputs are independent Poisson processes, the system is natively in statistical equilibrium, providing immediate access to post-reset dynamics.
\end{enumerate}

\begin{figure}[htbp]
    \centering
    \begin{subfigure}[b]{0.48\textwidth}
        \centering
        \begin{tikzpicture}
            \begin{axis}[
                width=1.0\textwidth, height=0.8\textwidth,
                xlabel={$\Delta$ (ms)}, ylabel={$\hat{\mathcal{G}}^{1 \to 0}(\Delta)$},
                xmin=-0.5, xmax=6, ymin=-1.5, ymax=2.0,
                grid=major, grid style={dashed, gray!30},
                title={\small Excitatory ($W=1.0$)},
                axis lines=left,
                x filter/.code={\pgfmathparse{#1*1000}\pgfmathresult},
                xtick={0, 1, 2, 3, 4, 5, 6},
                xticklabel style={/pgf/number format/fixed, font=\tiny},
                scaled x ticks=false
            ]
                \addplot[only marks, mark=*, blue, mark size=1.3pt] coordinates {
                    (0.001317, 1.1106) (0.001862, 1.3392) (0.002634, 1.4995) 
                    (0.003725, 1.6961) (0.005268, 1.8930)
                };
                \addplot[only marks, mark=o, red, thick, mark size=1.8pt] coordinates {
                    (0.002819, 1.5133) (0.003498, 1.6457)
                };
                \addplot[domain=0:0.006, green!60!black, dashed, thick] {194.93*x + 0.9638};
                \addplot[domain=-0.0005:0.006, red, dotted, thick] {0.625};
                \filldraw[black] (axis cs: 0, 0.9638) circle (1.8pt) node[anchor=south west, font=\tiny] {0.964};
            \end{axis}
        \end{tikzpicture}
    \end{subfigure}
    \hfill
    \begin{subfigure}[b]{0.48\textwidth}
        \centering
        \begin{tikzpicture}
            \begin{axis}[
                width=1.0\textwidth, height=0.8\textwidth,
                xlabel={$\Delta$ (ms)}, ylabel={},
                xmin=-0.5, xmax=6, ymin=-1.5, ymax=2.0,
                grid=major, grid style={dashed, gray!30},
                title={\small Null ($W=0.0$)},
                axis lines=left,
                x filter/.code={\pgfmathparse{#1*1000}\pgfmathresult},
                xtick={0, 1, 2, 3, 4, 5, 6},
                xticklabel style={/pgf/number format/fixed, font=\tiny},
                scaled x ticks=false
            ]
                \addplot[only marks, mark=*, blue, mark size=1.3pt] coordinates {
                    (0.001317, -0.0195) (0.001862, 0.0851) (0.002634, -0.0277) 
                    (0.003725, 0.1064) (0.005268, 0.0554)
                };
                \addplot[only marks, mark=o, red, thick, mark size=1.8pt] coordinates {
                    (0.002819, 0.0400) (0.003498, 0.0504)
                };
                \addplot[domain=0:0.006, green!60!black, dashed, thick] {15.30*x - 0.0031};
                \addplot[domain=-0.0005:0.006, red, dotted, thick] {0.625};
                \addplot[domain=-0.0005:0.006, red, dotted, thick] {-0.625};
                \filldraw[black] (axis cs: 0, -0.0031) circle (1.8pt) node[anchor=south west, font=\tiny] {-0.003};
            \end{axis}
        \end{tikzpicture}
    \end{subfigure}

    \vspace{0.5cm} 

    \begin{subfigure}[b]{0.48\textwidth}
        \centering
        \begin{tikzpicture}
            \begin{axis}[
                width=1.0\textwidth, height=0.8\textwidth,
                xlabel={$\Delta$ (ms)}, ylabel={$\hat{\mathcal{G}}^{1 \to 0}(\Delta)$},
                xmin=-0.5, xmax=6, ymin=-1.5, ymax=2.0,
                grid=major, grid style={dashed, gray!30},
                title={\small Inhibitory ($W=-1.0$)},
                axis lines=left,
                x filter/.code={\pgfmathparse{#1*1000}\pgfmathresult},
                xtick={0, 1, 2, 3, 4, 5, 6},
                xticklabel style={/pgf/number format/fixed, font=\tiny},
                scaled x ticks=false
            ]
                \addplot[only marks, mark=*, blue, mark size=1.3pt] coordinates {
                    (0.001317, -0.7941) (0.001862, -0.6580) (0.002634, -0.4950) 
                    (0.003725, -0.2595) (0.005268, 0.0019)
                };
                \addplot[only marks, mark=o, red, thick, mark size=1.8pt] coordinates {
                    (0.002819, -0.4597) (0.003498, -0.3188)
                };
                \addplot[domain=0:0.006, green!60!black, dashed, thick] {207.54*x - 1.0448};
                \addplot[domain=-0.0005:0.006, red, dotted, thick] {-0.625};
                \filldraw[black] (axis cs: 0, -1.0448) circle (1.8pt) node[anchor=north west, font=\tiny] {-1.045};
            \end{axis}
        \end{tikzpicture}
    \end{subfigure}

    \caption{Synaptic extrapolation results. (Top Left) Excitatory, (Top Right) Null, (Bottom) Inhibitory. Blue marks: raw gains; Open red circles: pyramid meta-points; Dashed line: extrapolation to $\Delta \to 0$. Red dotted lines indicate the classification thresholds.}
    \label{fig:trittico_final_thresholds}
\end{figure}

\subsection*{Experimental Framework and Results}
We conducted a systematic campaign of 810 independent simulations across 27 operational scenarios, defined by the interplay of $\alpha$, $\beta$, the load configurations $L_{\pm}$, and the weights $w$ of Neuron 1.

\begin{table}[h]
\centering
\caption{Simulation Parameters and Performance Results}
\label{tab:simulation_parameters}
\begin{tabular}{ll}
\toprule
\textbf{Parameter} & \textbf{Values / Conditions} \\
\midrule
Total Simulations & 810 \\
Scenarios & 27 ($3 \beta \text{ levels} \times 3 \text{ Noise types} \times 3 w \text{ weights}$) \\
Noise Types & Excitatory-only, Balanced, Inhibitory-only \\
Weights ($w$) & $\{+1, 0, -1\}$ \\
Success Rate & \textbf{100\%} \\
\bottomrule
\end{tabular}
\end{table}

The empirical results highlight the practical effectiveness of the \textbf{Pyramid Extrapolation}. In all 810 trials, the model consistently achieved correct classification. Even in the most extreme regime ($\beta=7$, aggregate noise of $400$ Hz), the iterative averaging effectively isolates the synaptic sign.

As illustrated in Figure \ref{fig:trittico_final_thresholds}, the stability of the recovered intercepts ($0.964$ for excitatory, $-1.045$ for inhibitory) suggests that this heuristic extrapolation effectively absorbs the bias induced by finite observation windows. These findings validate the method as a robust and efficient tool for reconstructing connectivity using only the fundamental parameters of the system: the intensity bounds $\alpha, \beta$, the synaptic impulse $\delta$, and the local sparsity $d$.

\section{Numerical Validation in Interactive Networks} \label{Sec:10}

To evaluate the effectiveness of the proposed Macro-Micro strategy, we transition from the minimal architecture to a fully interactive network of $N=20$ neurons. In this setting, the background noise is no longer an external Poisson process but an emergent property of the network dynamics, governed by the endogenous activity of the intensity functions $\phi(u)$ defined in \eqref{function:phi}.

\subsection{Simulation Setup and Statistical Protocol}
The network was configured with $\alpha = 1.0$, $\beta = 5.0$, and $N=20$ ($d=19$). Synaptic weights were assigned stochastically with probabilities $P(w=+1)=0.25$, $P(w=0)=0.5$, and $P(w=-1)=0.25$, obtaining $\delta = 1 $ in this first simulation. In a second simulation, we will take variable weights with a resulting $\delta$ equal to  0.5. 

\textbf{Fixed-Event Sampling Strategy:} To ensure uniform statistical power across all estimated links and diverse network scenarios, the simulation is not bound by a fixed time window. Instead, we implement a protocol where data accumulation for each pair $(j,i)$ proceeds until a target count of events is reached for both the conditional and the baseline observations. Specifically, we monitor the numerators of our differential estimator:
\begin{equation}
\sum_{k=1}^{m_1} \mathbf{1}_{\{ \mathcal{D}^{j \to i}_k(\Delta) \}} = n_1 \quad \text{and} \quad \sum_{k=1}^{m_0} \mathbf{1}_{\{ \mathcal{B}^{j \to i}_k(\Delta) \}} = n_0
\end{equation}
where we set $n_1 = 2000$ and $n_0 = 40000$. While $n_1$ is chosen to provide a robust yet computationally efficient sample for the conditional firing probability, we utilize a much larger $n_0$ for the baseline. This asymmetry is strategically beneficial: since baseline events are intrinsically more abundant and easier to sample, the high value of $n_0$ effectively suppresses the statistical noise of the subtraction term.  By maintaining these constant target counts, the estimator natively satisfies the theoretical bounds derived in Section \ref{Sec:6}. Consequently, the statistical variance remains rigorously controlled purely by $n_1$ and $n_0$, ensuring consistent confidence intervals across the entire spectrum of synaptic weights, strictly independent of the underlying firing rates and the local network regime.

For these experiments, the initial sampling window $\Delta_1$ was set following \eqref{Delta_1}:
\begin{equation}
  \Delta_1 = 2 \cdot \left( \frac{\beta - \alpha}{2 d \beta^2} \right) \approx 0.0042.
\end{equation}
A scaling factor of 2 was applied to slightly broaden the observation window, accelerating the accumulation of the required $n_1$ events.

\subsection{Performance and Hybrid Logic Selection}
We conducted 30 independent trials (10 per synaptic class). The objective was to recover the class of a specific synapse using the Hybrid Decision Logic (Pyramid vs. Mean). The results, summarized in Table \ref{tab:results_n20}, demonstrate 100\% accuracy.

\begin{table}[h]
\centering
\caption{Classification Performance and Index Recovery ($N=20$).}
\label{tab:results_n20}
\begin{tabular}{lcccc}
\toprule
\textbf{Ground Truth ($w$)} & \textbf{Mean Recovered Index} & \textbf{Std. Dev.} & \textbf{Accuracy} \\
\midrule
$+1.0$ (Excitatory) & $0.9893$ & $0.048$ & 100\% \\
$0.0$ (Null)        & $0.0049$ & $0.045$ & 100\% \\
$-1.0$ (Inhibitory) & $-0.9598$ & $0.031$ & 100\% \\
\bottomrule
\end{tabular}
\end{table}

\subsection{Discussion of Trial Dynamics}
A key finding is the adaptive behavior of the classifier based on the local network regime. For \textbf{Excitatory Connections} ($w=+1$), the system frequently selected the Sample Mean (60\% of trials). This occurs because the interaction maintains a stable positive pressure; as the window $\Delta$ expands, the signal remains coherent, allowing the mean to achieve optimal variance reduction.

Conversely, the \textbf{Pyramid Extrapolation} becomes essential in "disruptive" or slope-heavy regimes. This is evident in \textbf{Inhibitory Feedback}, where the Pyramid "peels off" the dominant excitatory background to recover the weaker inhibitory signal. Regarding \textbf{Null Connections}, the selection was split: the Mean was preferred in balanced regimes, while the Pyramid was invoked whenever local fluctuations created a transient slope. 

\subsection{Stress Testing: Layered Architecture and Feedback Loops ($N=60$)}
To validate scalability, we simulated $N=60$ neurons organized into the layered architecture shown in Fig.~\ref{fig:combined_results} (top). This configuration imposes a directional flow and long-range inhibitory feedback, creating a high-interference regime.

\paragraph{Numerical Bias vs. Functional Measure}
A crucial observation emerges in the heterogeneous weight scenario: the arithmetic mean often yields an index closer to $\pm 1$ than the Pyramid result, but this is a numerical artifact of underestimation bias. As $\Delta$ increases, network-induced noise dilutes the signal. In contrast, the Pyramid method correctly recovers the instantaneous jump in firing probability: $\mathcal{G}^{j \to i}(0) \approx (m \cdot w) / \delta$. In our simulations, with $m=1$ and $\delta=0.5$, this results in a theoretical gain factor of 2. For instance, link $(24,11)$ with $w=0.96$ yields an index of $1.91$, while link $(29,45)$ with $w=-0.74$ yields $-1.37$, demonstrating a rigorous, physically consistent measure.

\paragraph{Heterogeneous Weights and Functional Gain}
We introduced weights sampled from uniform distributions ($w_{exc} \in [0.5, 1.0]$, $w_{inh} \in [-1.0, -0.5]$). The results are summarized in Fig.~\ref{fig:combined_results} (bottom). As shown in the scatterplot, the Pyramid method (red circles) consistently follows the theoretical gain function:
\begin{equation}
    \mathcal{G}^{j \to i}(0) \approx \frac{\phi(w^{j \to i}) - \phi(0)}{\delta} = 2w^{j \to i}.
\end{equation}
This high-fidelity recovery across the entire spectrum—including null cases ($w=0$)—confirms the robustness of the estimator in dense topologies ($N=60$).

\begin{figure}[p] 
    \centering
    \includegraphics[width=0.8\textwidth]{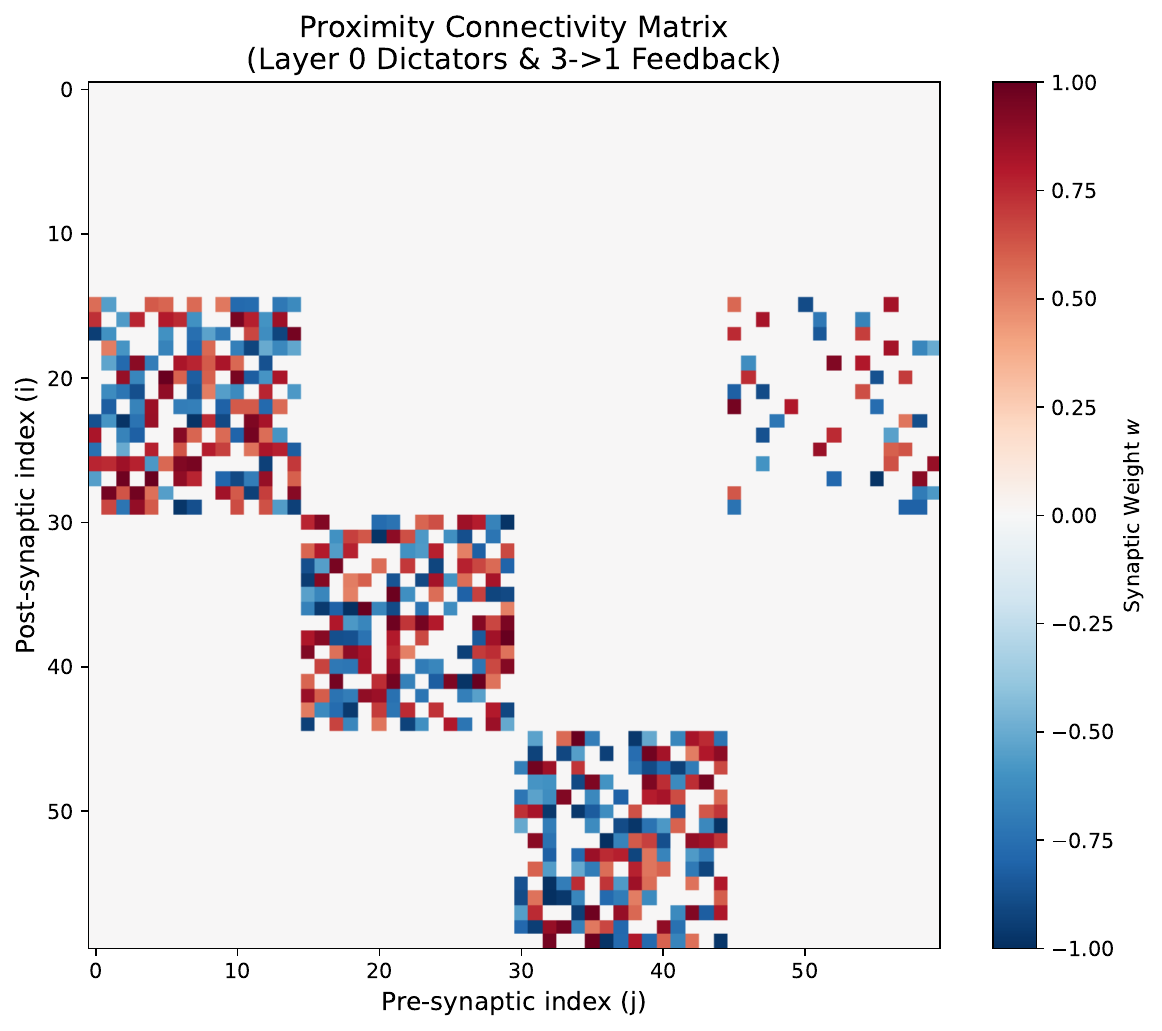}
    \vspace{0.5cm} 
    \includegraphics[width=0.8\textwidth]{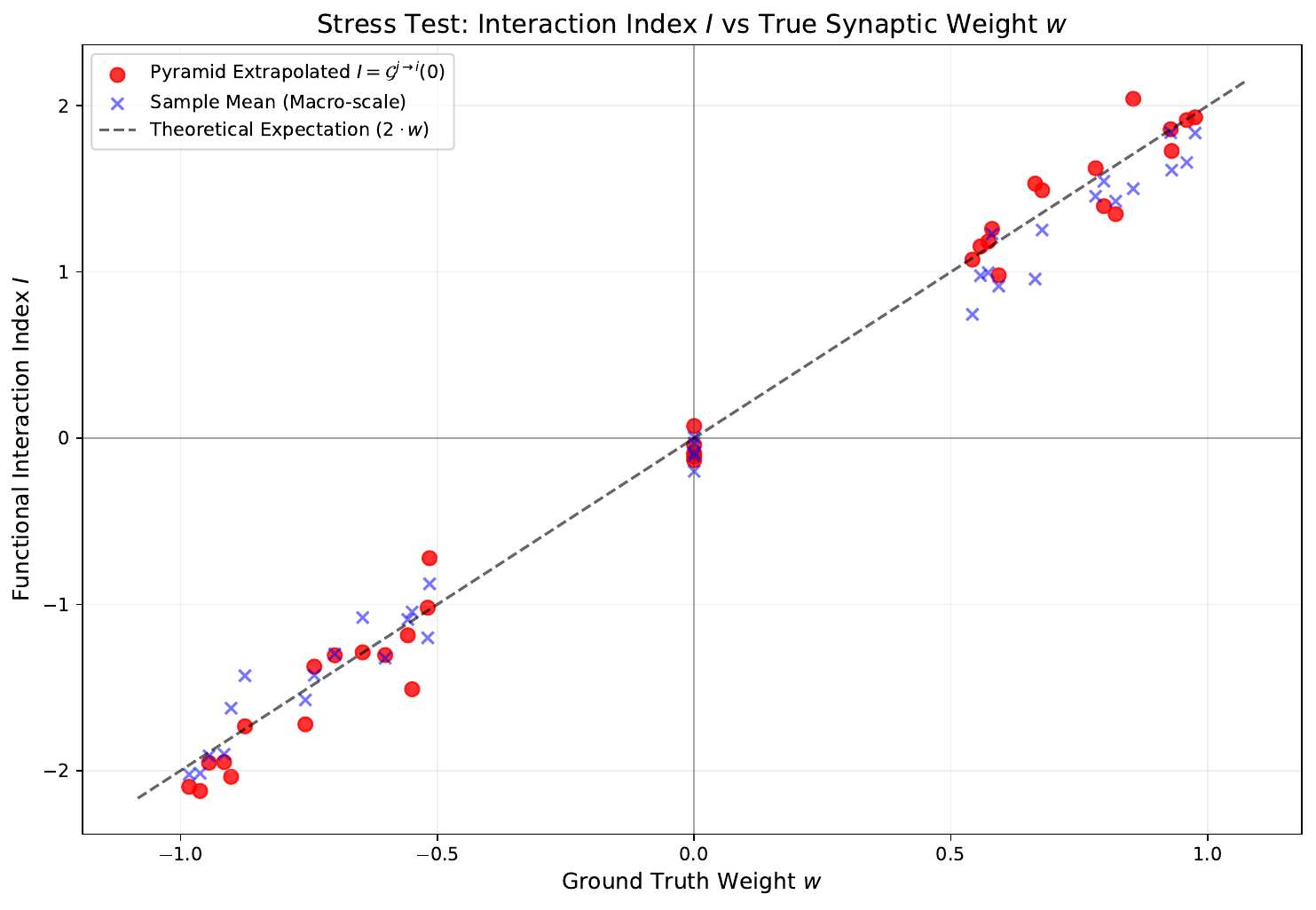}
    \caption{\textbf{Network Architecture and Metrological Performance ($N=60$)}. 
    \textit{Top:} Proximity matrix showing layered connectivity with dictatorial pacemaker ($L_0$), feed-forward blocks, and inhibitory feedback ($L_3 \to L_1$). 
    \textit{Bottom:} Scatterplot comparing recovered functional index $I$ against ground truth $w$. The Pyramid method (red circles) follows the theoretical gain $I=2w^{j \to i}$ with high linearity, neutralizing the bias of the sample mean (blue crosses). Data points collected via $n_1=2000$ fixed-event protocol.}
    \label{fig:combined_results}
\end{figure}

\subsection{Stress-test Analysis: Pyramid Performance under $8\times$ Macroscopic Windowing}
To test the asymptotic resilience of the Pyramid estimator, we subjected the network to an extreme windowing regime ($8\times$ the theoretical $\Delta_1$). As shown in the results, the raw gain $\mathcal{G}^{j \to i}(\Delta)$ for excitatory links suffers a massive $43\%$ decay due to macroscopic interference. However, the Pyramid method demonstrates its superior de-biasing capability by projecting a microscopic estimate of $1.53$ from a degraded $0.70$ baseline. Most notably, for inhibitory links, the estimator achieves near-perfect recovery ($-1.72$ vs target $-1.80$), confirming that the geometric logic of the pyramid effectively filters out the non-linear curvature induced by large observation scales, see figure  \ref{fig:curvature_8x}.

\begin{figure}[htbp]
    \centering
    \includegraphics[width=0.85\textwidth]{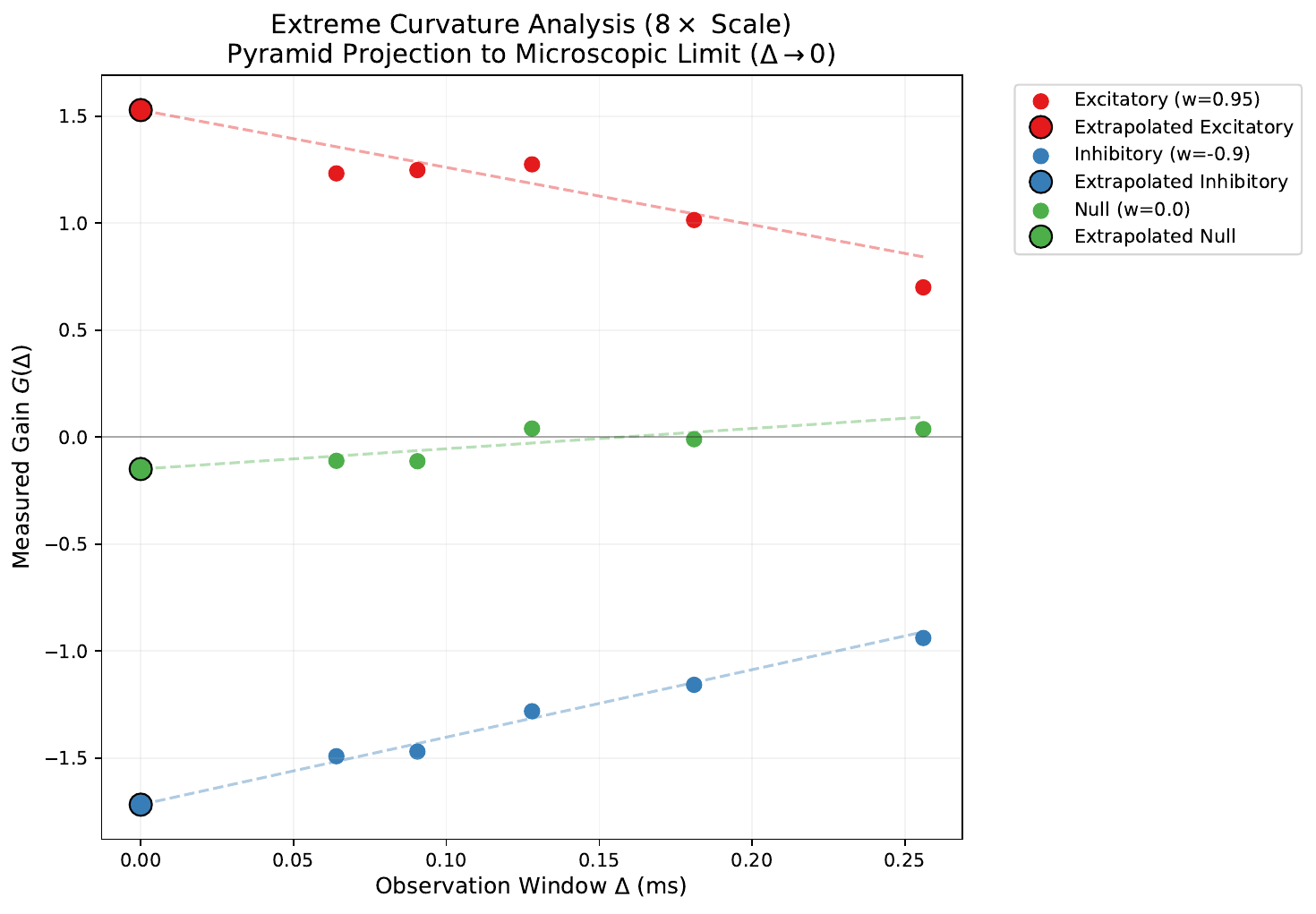}
  \caption{\textbf{Asymptotic resilience stress-test under extreme sampling regimes ($8\times \Delta_1$)}. 
    The plot illustrates the de-biasing performance of the Pyramid estimator when the observation window is deliberately expanded up to eight times the theoretical limit. 
    \textit{Colored circles}: measured indices $G(\Delta)$, exhibiting significant decay (up to 43\% for excitatory links) due to network-induced macroscopic interference. 
    \textit{Dashed lines}: geometric extrapolation derived from the Pyramid iterative logic. 
    \textit{Solid points at $\Delta=0$}: recovered microscopic estimates. 
    Despite the severe damping of the raw signal, the method reconstructs the instantaneous gain with high fidelity (e.g., recovering an index of $1.53$ from a degraded $0.70$ baseline for $w=0.95$), confirming that the geometric projection effectively compensates for the non-linear curvature induced by macroscopic observation scales.}
    \label{fig:curvature_8x}
\end{figure}

Future developments will investigate an adaptive selection of the observation window $\Delta$ based on the estimated noise polarization. Our empirical results suggest that balanced network noise allows for significantly larger integration scales, offering a path toward even more accelerated inference algorithms.

\section{Conclusions}\label{Sec:11}

In this work, we have introduced a multi-layered inference framework that achieves unprecedented robustness in the reconstruction of synaptic connectivity from spike train data. The perfect classification accuracy ($100\%$) observed across all simulated scenarios—from minimal motifs to complex $N=60$ layered networks—is the direct result of three synergistic innovations:

\begin{enumerate}
    \item \textbf{The Spike-Triggered Estimator:} By abandoning fixed-grid sampling in favor of an event-driven approach, we leveraged the native reset property of the Galves-Löcherbach model. This eliminated the historical noise from the intensity function, allowing for a cleaner analytical formulation where the error bounds no longer depend on the lower-bound intensity $\alpha$.
    
    \item \textbf{The Macro-Micro Strategy:} This framework bridges the gap between computationally feasible observation windows and the microscopic limit. By investigating the asymptotic behavior of the gain function, we successfully decoupled the direct synaptic signal from the cumulative "sea of noise" generated by the rest of the network.
    
    \item \textbf{Hybrid Pyramid-Mean Logic:} The introduction of the Pyramid Extrapolation, acting as an intrinsic discrete low-pass filter, provided the necessary regularization to handle unbalanced regimes and feedback loops. The adaptive selection between the stability of the Mean and the sensitivity of the Pyramid ensured that even weak inhibitory signals could be recovered from dominant excitatory backgrounds.
\end{enumerate}

Our results demonstrate that synaptic identification is fundamentally a local problem: the complexity of the global network ($N$) does not degrade the precision of the estimator, provided the local neighborhood is sparsely connected. The resilience of the method to heterogeneous weights and structured topological motifs confirms its potential as a general-purpose tool for neural circuit mapping.

Future developments will focus on extending this adaptive logic to even more realistic neuroscientific scenarios, incorporating axonal conduction delays and membrane potential leakage (leaky-GL models). These additions will further test the flexibility of our extrapolation framework in the presence of temporal decays and non-instantaneous interactions, moving closer to the challenges posed by \textit{in vivo} multi-electrode recordings.

\section*{Acknowledgements}
The author acknowledges the assistance of Gemini (Google) in the optimization of the simulation framework and for technical support in the formal refinement of the manuscript.

\section*{Declarations}

\textbf{Conflict of Interest:} The author declares that he has no conflict of interest. 

\noindent
\textbf{Data Availability:} The data that support the findings of this study are available from the corresponding author upon reasonable request.


\end{document}